\documentclass{IEEEtran}
\usepackage{cite}
\usepackage{graphicx}
\usepackage{psfrag}
\usepackage{subfigure}
\usepackage{url}
\usepackage{amsmath}
\usepackage{array}
\usepackage{amssymb}
\usepackage{amsfonts}
\newtheorem{proposition}{Proposition}
\newtheorem{lemma}{Lemma}

\newtheorem{corollary}{Corollary}
\newtheorem{remark}{Remark}
\newtheorem{definition}{Definition}
\newtheorem{theorem}{Theorem}
\newtheorem{example}{Example}

\title{Network error correction for unit-delay, memory-free networks using convolutional codes}
\author{K.~Prasad and B. Sundar Rajan, Senior member, IEEE}
\date{\today}
\begin{document}
\maketitle
\thispagestyle{empty}
\begin{abstract}
A single source network is said to be \textit{memory-free} if all of the internal nodes (those except the source and the sinks) do not employ memory but merely send linear combinations of the symbols received at their incoming edges on their outgoing edges. In this work, we introduce network-error correction for single source, acyclic, unit-delay, memory-free networks with coherent network coding for multicast. A convolutional code is designed at the source based on the network code in order to correct network-errors that correspond to any of a given set of error patterns, as long as consecutive errors are separated by a certain interval which depends on the convolutional code selected. Bounds on this interval and the field size required for constructing the convolutional code with the required free distance are also obtained. We illustrate the performance of convolutional network error correcting codes (CNECCs) designed for the unit-delay networks using simulations of CNECCs on an example network under a probabilistic error model.
\end{abstract}
\section{Introduction}
\label{sec1}
Network coding was introduced in \cite{ACLY} as a means to improve the rate of transmission in networks, and often achieve capacity in the case of single source networks. Linear network coding was introduced in \cite{CLY}. An algebraic formulation of network coding was discussed in \cite{KoM} for both instantaneous networks and networks with delays. 

Network error correction, which involved a trade-off between the rate of transmission and the number of correctable network-edge errors, was introduced in \cite{YeC} as an extension of classical error correction to a general network setting. Along with subsequent works \cite{Zha} and \cite{YaY}, this generalized the classical notions of the Hamming weight, Hamming distance, minimum distance and various classical error control coding bounds to their network counterparts. In all of these works, it is assumed that the sinks and the source know the network topology and the network code, which is referred to as \textit{coherent network coding}. Network error correcting codes were also developed for \textit{non-coherent (channel oblivious) network coding} in \cite{KoK},\cite{SKK} and \cite{EtS}. Network error correction under probabilistic error settings has been studied in \cite{SKK2}. Most recently, multishot subspace codes were introduced in \cite{NoU} for the subspace channel \cite{KoK} based on block-coded modulation. 

A set of code symbols generated at the source at any particular time instant is called a \textit{generation} of code symbols. So far, network error correcting schemes have been studied only for acyclic \textit{instantaneous} (delay-free) networks in which each node could take a linear combination of symbols of only the same generation. 

Convolutional network codes were discussed in \cite{ErF,CLYZ,LiY} and a connection between network coding and convolutional coding was analyzed in \cite{FrS}. Convolutional network error correcting codes (which we shall henceforth refer to as CNECCs) have been employed for network error correction in instantaneous networks in \cite{PrR}. 

A \textit{network use} \cite{PrR} is a single usage of all the edges of the network to multicast utmost min-cut number of symbols to each of the sinks. An \textit{error pattern} is a subset of the set of edges of the network which are in error. It was shown in \cite{PrR} that any network error which has its error pattern amongst a given set of error patterns can be corrected by a proper choice of a convolutional code at the source, as long as consecutive network errors are separated by a certain number of network uses. Bounds were derived on the field size for the construction of such CNECCs, and on the minimum separation in network uses required between any two network errors for them to be correctable.

\textit{Unit-delay networks} \cite{CLYZ} are those in which every link between two nodes has a single unit of delay associated with it. In this work, we generalize the approach of \cite{PrR} to the case of network error correction for acyclic, unit-delay, memory-free networks. We consider single source acyclic, unit-delay, memory-free networks where coherent network coding (for the purpose of multicasting information to a set of sinks) has been implemented and thereby address the following problem.  

\textit{Given an acyclic, unit-delay, single source, memory-free network with a linear multicast network code, and a set of error patterns $\Phi$, how to design a convolutional code at the source which will correct network errors corresponding to the error patterns in $\Phi$, as long as consecutive errors are separated by a certain number of network uses?}

The main contributions of this paper are as follows.
\begin{itemize}
\item Network error correcting codes for unit-delay, memory-free networks are discussed for the first time. 
\item A convolutional code construction for the given acyclic, unit-delay, memory-free network that corrects a given pattern of network errors (provided that the occurrence of consecutive errors is separated by certain number of network uses) is given. For the same network, if the network code is changed, then the convolutional code obtained through our construction algorithm may also change. Several results of this paper can be treated as a generalization of those in \cite{PrR}. 
\item We derive a bound on the minimum field size required for the construction of CNECCs for unit-delay networks with the required minimum distance, following a similar approach as in \cite{PrR}. 
\item We also derive a bound on the minimum number of network uses that two error events must be separated by in order that they get corrected. 
\item We also introduce \textit{processing functions} at the sinks in order to address the realizability issues that arise in the decoding of CNECCs for unit-delay networks.
\item We show that the unit-delay network demands a CNECC whose free distance should be at least as much as that of any CNECC for the corresponding instantaneous network to correct the same number of network errors.
\item Using a probabilistic error model on a modified butterfly unit-delay memory-free network, we use simulations to study the performance of different CNECCs.  
\item Towards achieving convolutional network error correction, we address the issue of network coding for an acyclic, unit-delay, memory-free network. As a by-product, we prove that an $n$-dimensional linear network code (a set of local kernels at the nodes) for an acyclic, instantaneous network continues to be an $n$-dimensional linear network code (i.e the dimension does not reduce) for the same acyclic network, however being of unit-delay and memory-free nature.
\end{itemize}

The rest of the paper is organized as follows. 
In Section \ref{sec3}, we discuss the general network coding set-up and network errors. In Section \ref{sec4}, we give a construction for an input convolutional code for the given acyclic, unit-delay, memory-free network which shall correct errors corresponding to a given set of error patterns and also derive some bounds on the field size and minimum separation in network uses between two correctable network errors. In Section \ref{sec5}, we give some examples for this construction. In Section \ref{sec6} we provide a comparison between CNECCs for instantaneous networks \cite{PrR} and those for unit-delay, memory-free networks of this paper. In Section \ref{sec7}, we discuss the results of simulations of different CNECCs run on a modified butterfly network assuming a probabilistic model on edge errors in the network. We conclude this paper in Section \ref{sec8} with some remarks and some directions for further research.   

\section{Problem Formulation - CNECCs for unit-delay, memory-free networks}
\label{sec3}
\subsection{Network model}
We consider acyclic networks with delays in this paper, the model for which is as in  \cite{KoM}, \cite{CLYZ}. An acyclic network can be represented as an acyclic directed multi-graph (a graph that can have parallel edges between nodes) ${\cal G}$ = ($\cal V,\cal E$) where $\cal V$ is the set of all vertices and $\cal E$ is the set of all edges in the network. 

We assume that every edge in the directed multi-graph representing the network has unit \emph{capacity} (can carry utmost one symbol from $\mathbb{F}_q$). Network links with capacities greater than unit are modeled as parallel edges. The network has delays, i.e, every edge in the directed graph representing the input has a unit delay associated with it, represented by the parameter~$z$. Such networks are known as \textit{unit-delay networks}. Those network links with delays greater than unit are modeled as serially concatenated edges in the directed multi-graph. The nodes of the network may receive information of different generations on their incoming edges at every time instant. We assume that the internal nodes are memory-free and merely transmit a linear combination of the incoming symbols on their outgoing edges. 

Let $s\in\cal V$ be the source node and $\cal T$ be the set of all receivers. Let $n_{_T}$ be the unicast capacity for a sink node  $T\in{\cal T}$ i.e the maximum number of edge-disjoint paths from $s$ to $T$. Then 
\[
n = \min_{T\in{\cal T}}n_{_T}
\]
is the max-flow min-cut capacity of the multicast connection. 

\subsection{Network code}	
We follow  \cite{KoM} in describing the network code. For each node $v\in{\cal V}$, let the set of all incoming edges be denoted by $\Gamma_I(v)$. Then $|\Gamma_I(v)|=\delta_I(v)$ is the in-degree of $v$. Similarly the set of all outgoing edges is defined by $\Gamma_O(v)$, and the out-degree of the node $v$ is given by $|\Gamma_O(v)|=\delta_O(v)$.  For any $e \in {\cal E}$ and $v \in {\cal V}$, let $head(e)=v$, if $v$ is such that $e \in \Gamma_I(v)$. Similarly, let $tail(e)=v$, if $v$ is such that $e \in \Gamma_O(v)$.  We will assume an ancestral ordering on ${\cal E}$ of the acyclic graph ${\cal G}$.

The network code can be defined by the local kernel matrices of size $\delta_I(v)\times\delta_O(v)$ for each node $v\in{\cal V}$ with entries from $\mathbb{F}_q$. The global encoding kernels for each edge can be recursively calculated from these local kernels.

The network transfer matrix, which governs the input-output relationship in the network, is defined as given in \cite{KoM} for an $n$ dimensional network code. Towards this end, the matrices $A$,$K$,and $B^T$(for every sink $T\in {\cal T}$ are defined as follows.

The entries of the $n \times |{\cal E}|$ matrix $A$ are defined as
\[
A_{i,j}=\left\{
\begin{array}{cc}
\alpha_{i,e_j} & \text{   if } e_j \in \Gamma_{O}(s)\\
0  & \text{ otherwise}
\end{array}
\right. 
\]
where $\alpha_{i,e_j} \in \mathbb{F}_q$ is the local encoding kernel coefficient at the source coupling input $i$ with edge $e_j \in \Gamma_O(s)$.\newline 

The entries of the $|{\cal E}| \times |{\cal E}|$ matrix $K$ are defined as
\[
K_{i,j}=\left\{
\begin{array}{cc}
\beta_{i,j} & \text{   if } head(e_i) = tail(e_j) \\
0  & \text{ otherwise} 
\end{array}
\right. 
\]
where the set of $\beta_{i,j} \in \mathbb{F}_q$ is the local encoding kernel coefficient  between $e_i$ and $e_j$, at the node $v=head(e_i) = tail(e_j)$.\newline

For every sink $T \in {\cal T}$, the entries of the $|{\cal E}| \times n$ matrix $B^T$ are defined as  
\[
B^T_{i,j}=\left\{
\begin{array}{cc}
\epsilon_{e_j,i} & \text{   if } e_j \in \Gamma_{I}(T)\\
0  & \text{ otherwise} 
\end{array} 
\right. 
 \]
where all $\epsilon_{e_j,i} \in \mathbb{F}_q$.

For unit-delay, memory-free networks, we have 
\begin{eqnarray*}
F(z): = (I-zK)^{-1} 
\end{eqnarray*}
where $I$ is the $|{\cal E}| \times |{\cal E}|$ identity matrix.  Now we have the following definition.
\begin{definition}[\cite{KoM}]
\label{nettransmatrix}
\textit{The network transfer matrix}, $M_{T}(z)$, corresponding to a sink node ${T} \in \cal T$ is a full rank (over $\mathbb{F}_q(z)$) $n \times n$ matrix defined as 
\[
M_{T}(z):=AF(z)B^{T}=AF_{T}(z).
\] 
\end{definition} 	

With an $n$-dimensional network code, the input and the output of the network are $n$-tuples of elements from $\mathbb{F}_q[[z]].$ Definition \ref{nettransmatrix} implies that if $\boldsymbol{x}(z) \in \mathbb{F}_q^n[[z]]$ is the input to the unit-delay, memory-free network, then at any particular sink $T \in \cal T$, we have the output, $\boldsymbol{y}(z) \in \mathbb{F}_q^n[[z]]$, to be $\boldsymbol{y}(z) = \boldsymbol{x}(z)M_T(z).$
\subsection{CNECCs for single source, unit-delay, memory-free networks}
A primer on the basics of convolutional codes can be found in Appendix \ref{app1}. Assuming that an $n$-dimensional linear network code multicast has been implemented in the given single source unit-delay, memory-free network, we extend the definitions of the input and output convolutional codes of CNECCs for  instantaneous networks from \cite{PrR} to the unit-delay, memory-free case.
\begin{definition}
An \textit{input convolutional code}, ${\cal C}_s$, corresponding to an acyclic, unit-delay, memory-free network is a convolutional code of rate $~k/n (k < n)$ with a \textit{input generator matrix }$G_{I}(z)$ implemented at the source of the network.
\end{definition}
\begin{definition}
The \textit{output convolutional code} ${\cal C}_T$, corresponding to a sink node ${T} \in \cal T$ in the acyclic, unit-delay, memory-free network is the $~k/n (k < n)$ convolutional code generated by the \textit{output generator matrix} $G_{O,{T}}(z)$ which is given by
\[
G_{O,{T}}(z) = G_I(z)M_{T}(z)
\]
with $M_T(z)$ being the full rank network transfer matrix corresponding to an $n$-dimensional network code.
\end{definition}
\begin{example}
\begin{figure}[htbp]
\centering
\includegraphics[totalheight=2.2in,width=3.5in]{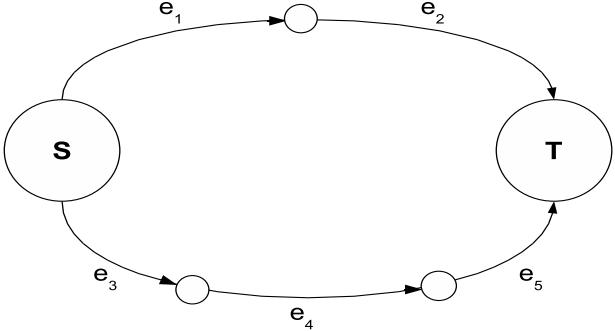}
\caption{A simple min-cut $2$ network with one source and one sink}	
\label{fig:simplenetwork}	
\end{figure}
Consider the single source, single sink network as shown in Fig.\ref{fig:simplenetwork}. Let the field under consideration be $\mathbb{F}_2.$ The local kernels at the intermediate node are unity. Therefore the network transfer matrix at the sink is (assuming the given ancestral ordering)
\[
M_T(z) = 
\left[
\begin{array}{cc}
z & 0 \\
0 & z^2
\end{array}
\right]
\]
Suppose we choose the input convolutional code ${\cal C}_s$ to be generated by the matrix
\[
G_I(z)=\left[1+z^2~~1+z+z^2\right].
\]
Then the output convolutional code ${\cal C}_T$ is generated by
\[
G_{O,T}(z)=\left[z+z^3~~z^2+z^3+z^4\right].
\]
\end{example}
\subsection{Network errors}

Observing a `snap-shot' of the network at any particular time instant, we define the following terms. An \textit{error pattern} $\rho,$ as stated previously, is a subset of ${\cal E}$ which indicates the edges of the network in error. An \textit{error vector} $\boldsymbol{w}$ is a $1\times |{\cal E}|$ vector which indicates the error occurred at each edge. An error vector is said to match an error pattern $(\text{i.e }\boldsymbol{w} \in \rho)$ if all non-zero components of $\boldsymbol{w}$ occur only on the edges in $\rho$. An \textit{error pattern set} $\Phi$ is a collection of subsets of ${\cal E}$, each of which is an error pattern. 

Let $\boldsymbol{x}(z) \in \mathbb{F}_q^n[[z]]$ be the input to the network , and $\boldsymbol{w} \in F_q^{|{\cal E}|}$ be the error vector corresponding to the network errors that occurred at any time instant $i$ ($i \in \mathbb{Z}_0^+$, referenced from the first input time instant). Then, the output, $\boldsymbol{y}(z) \in \mathbb{F}_q^n[[z]]$ at any particular sink $T \in \cal T$ can be expressed as 
\[
\boldsymbol{y}(z) = \boldsymbol{x}(z)M_T(z) + \boldsymbol{w}z^iF_T(z).
\]
In case there are a number of errors at a number of time instants, we have the formulation as
\[
\boldsymbol{y}(z) = \boldsymbol{x}(z)M_T(z)	 + \boldsymbol{w}(z)F_T(z)
\]
wherein every monomial of $\boldsymbol{w}(z) \in \mathbb{F}_q^{|\cal E|}[[z]]$ of the form $\boldsymbol{w}_iz^i$ incorporates the error vector $\boldsymbol{w}_i \in F_q^{|{\cal E}|}$ occurring at the time instant $i.$
\section{CNECCs for unit-delay, memory-free networks - Code Construction and Capability}
\label{sec4}
\subsection{Network code for acyclic unit-delay memory-free networks}
In Section \ref{construction}, we give a construction of a CNECC for a given acyclic, unit-delay, memory-free network. Towards that end, we first address the problem of constructing network codes for acyclic, unit-delay, memory-free networks. Although network code constructions have been given for acyclic instantaneous networks \cite{JSCEEJT}, the problem of constructing network codes for acyclic, unit-delay, memory-free networks is not directly addressed. The following lemma shows that solving an $n$-dimensional network code design problem for an acyclic, unit-delay, memory-free network is equivalent to solving that of the corresponding acyclic instantaneous network with the same number of dimensions. 
\begin{lemma}
Let ${\cal G}({\cal V},{\cal E})$ be a single source acyclic, unit-delay, memory-free network, and ${\cal G}_{inst}$ be the corresponding instantaneous network (i.e with the same graph as that of ${\cal G}$, but no delay associated with the edges). Let ${\cal N}$ be the set of all $\delta_I(v) \times \delta_O(v)$ matrices $\forall$ $v \in \cal V$, i.e, the set of local encoding kernel matrices at each node, describing an $n'$-dimensional network code (over $\mathbb{F}_q$) for ${\cal G}_{inst}$ ($n' \leq$ min-cut of the source-sink connections in ${\cal G}_{inst}$). Then the network code described by $\cal N$ continues to be an $n'$-dimensional network code (over $\mathbb{F}_q(z)$) for the unit-delay, memory-free network ${\cal G}.$
\end{lemma}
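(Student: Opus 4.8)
The plan is to show that the network transfer matrix $M_T(z) = AF(z)B^T$ remains full rank over $\mathbb{F}_q(z)$ whenever the corresponding instantaneous network has a full-rank transfer matrix over $\mathbb{F}_q$. The key observation is that the instantaneous network code described by $\cal N$ produces, for each sink $T$, a transfer matrix $M_T^{inst} = A(I-K)^{-1}B^T$ which is invertible over $\mathbb{F}_q$ by hypothesis; we want the analogous matrix $M_T(z) = A(I-zK)^{-1}B^T$ to be invertible over $\mathbb{F}_q(z)$, i.e.\ to have nonzero determinant as a rational function of $z$.

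First I would make precise the relationship between the delayed and instantaneous global kernels. Since ${\cal G}$ is acyclic and we have fixed an ancestral ordering on ${\cal E}$, the matrix $K$ is strictly upper triangular (with respect to that ordering), so $I - zK$ is invertible over $\mathbb{F}_q[z]$ with $(I-zK)^{-1} = \sum_{i\ge 0} z^i K^i$, a polynomial matrix because $K$ is nilpotent. Hence every entry of $F(z)$, and therefore of $M_T(z)$, lies in $\mathbb{F}_q[z]$, and $\det M_T(z) \in \mathbb{F}_q[z]$. The heart of the argument is then to relate $\det M_T(z)$ to $\det M_T^{inst}$: I would observe that setting $z = 1$ in $M_T(z)$ recovers exactly $M_T^{inst} = A(I-K)^{-1}B^T$, because substituting $z=1$ into $\sum_i z^iK^i$ gives $\sum_i K^i = (I-K)^{-1}$ (both finite sums, $K$ nilpotent). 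Therefore $\det M_T(z)\big|_{z=1} = \det M_T^{inst} \ne 0$, which shows the polynomial $\det M_T(z)$ is not identically zero, hence $M_T(z)$ is full rank over $\mathbb{F}_q(z)$. Running this over all sinks $T \in {\cal T}$ gives that $\cal N$ describes an $n'$-dimensional network code for ${\cal G}$.

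An alternative (and perhaps cleaner) route avoids the $z=1$ substitution: one can argue that for a generic/formal indeterminate $z$, $\det M_T(z)$ is a polynomial whose evaluation at any particular field element yields the determinant of the transfer matrix of an instantaneous network with local kernels scaled appropriately; but the $z=1$ specialization already pins down non-vanishing directly and is the shortest path, so I would use that. A subtlety worth a sentence is that "full rank over $\mathbb{F}_q(z)$" is the correct notion of an $n'$-dimensional code here (as in Definition \ref{nettransmatrix}), and that scaling the rows/columns by powers of $z$ — which is all the unit delays amount to along edge-disjoint paths — cannot destroy rank, consistent with the determinant computation.

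The main obstacle I anticipate is purely bookkeeping: one must be careful that the ancestral ordering genuinely makes $K$ strictly triangular (so that $I-zK$ is unimodular over $\mathbb{F}_q[z]$ and the geometric series terminates), and that the substitution $z=1$ is legitimate, i.e.\ that no denominators appear — which is exactly guaranteed by nilpotency of $K$. Once those structural facts are in place, the determinant-nonvanishing argument is immediate. No genuinely hard estimate or construction is needed; the content is the conceptual point that delays only multiply path gains by powers of $z$ and hence specialize back to the instantaneous code at $z=1$.
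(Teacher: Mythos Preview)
Your proof is correct and shares the same conceptual core as the paper's: both hinge on the specialization $M_T(z)\big|_{z=1} = M_T^{inst}$ to transfer full-rankness from the instantaneous matrix to the delayed one. The execution differs, however. The paper argues by contradiction on a putative $\mathbb{F}_q(z)$-linear dependence among the rows of $M_T(z)$ and then substitutes $z=1$; because the coefficients $a_i(z)/b_i(z)$ are rational, it must handle the possibility that some $b_i(1)=0$ via a separate case in which the dependence is multiplied through by the highest power of $(z-1)$ appearing in any denominator before evaluating. Your route sidesteps that case split entirely: by first invoking nilpotency of $K$ (from acyclicity and the ancestral ordering) you establish that $M_T(z)$ has polynomial entries, so $\det M_T(z)\in\mathbb{F}_q[z]$ and the substitution $z=1$ is immediately legitimate, yielding $\det M_T(1)=\det M_T^{inst}\neq 0$ in one line. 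What your approach buys is brevity and the avoidance of any denominator bookkeeping; what the paper's row-dependence argument buys is that it never explicitly appeals to $K$ being nilpotent, so it would extend verbatim to settings where $M_T(z)$ is merely rational in $z$---though in the acyclic, unit-delay case that extra generality is not needed.
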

\begin{proof}
Let $M_T$ be the $n' \times n'$ network transfer matrix of any particular sink node $T \in \cal T$ in ${\cal G}_{inst}$, and $M_T(z)$ be the $n' \times n'$ network transfer matrix of the same sink $T$ in $\cal G.$ We first note that the matrix $M_T$ can be obtained from $M_T(z)$ by substituting $z=z^0=1$, i.e, 
\[
M_T = M_T(z)|_{z=1}.
\]
Given that $M_T$ is full rank over $\mathbb{F}_q$, we will prove that $M_T(z)$ is full rank over $\mathbb{F}_q(z)$ by contradiction.

Suppose that $M_T(z)$ was not full rank over $\mathbb{F}_q(z)$, then we will have 
\begin{equation}
\label{eqn1}
\sum_{i=1}^{i=n'-1}\frac{a_i(z)}{b_i(z)}\boldsymbol{m}_i(z) = \boldsymbol{m}_{n'}(z)
\end{equation}
where $\boldsymbol{m}_i(z)$ is the $i^{th}$ row of $M_T(z)$ and $a_i(z), b_i(z) \in \mathbb{F}_q[z]$ $\forall$ $i = 1,2,..,n'$ are such that $b_i(z) \neq 0, a_i(z) \neq 0$  for at least one $i$, and $gcd(a_i(z), b_i(z))=1, ~\forall~ i.$

We have the following two cases

\textit{Case 1}: $b_i(z)|_{z=1} \neq 0$ $\forall i.$

Substituting $z = 1$ in (\ref{eqn1}), we have 
\begin{equation}
\label{eqn3}
\sum_{i=1}^{i=n'-1}\frac{a_i}{b_i}\boldsymbol{m}_i = \boldsymbol{m}_{n'}
\end{equation}
where $a_i=a_i(z)|_{z=1}, b_i=b_i(z)|_{z=1}$ and $\boldsymbol{m}_i=\boldsymbol{m}_i(z)|_{z=1}$ is the $i^{th}$ row of $M_T.$ 

Clearly $\boldsymbol{m}_{n'} \neq \boldsymbol{0}$ since $M_T$ is full rank, and hence the left hand side of (\ref{eqn3}) can't be zero. Therefore some non-zero linear combination of the first ${n'}-1$ rows of $M_T$ is equal to its ${n'}^{^{th}}$ row, which contradicts the given statement that $M_T$ is full rank over $\mathbb{F}_q.$ Therefore $M_T(z)$ must be full rank over $\mathbb{F}_q(z).$

\textit{Case 2}:  $b_i(z)|_{z=1} = 0$ for at least one $i.$

Let ${\cal I}' \subseteq \left\{1,2,...,{n'}\right\}$ such that $(z-1)^{p'}|b_i(z)$ for some positive integer $p'.$ Let $p$ be an integer such that 
\[
p = \max_{i \in {\cal I}'}{p'}
\]
Now, from (\ref{eqn1}) we haven
\begin{equation}
\label{eqn2}
\sum_{i=1}^{i={n'}-1}(z-1)^{p}\frac{a_i(z)}{b_i(z)}\boldsymbol{m}_i(z) =(z-1)^{p} \boldsymbol{m}_{n'}(z)
\end{equation}
Let ${\cal I} \subseteq \left\{1,2,..,{n'}\right\}$ such that $(z-1)^{p}|b_i(z)$ $\forall$ $i \in \cal I.$ Then we must have that $(z-1) \nmid a_i(z)$ $\forall$ $i \in \cal I,$ since $gcd(a_i(z),b_i(z))=1.$ Also, let $b_i'(z) = b_i(z)/(z-1)^{p} \in \mathbb{F}_q[z]$ $\forall$ $i \in \cal I.$ Hence we have 
\[
\left((z-1)^{p} \frac{a_i(z)}{b_i(z)}\right)|_{z=1}=\left(\frac{a_i(z)}{b_i'(z)}\right)|_{z=1} = \frac{a_i}{b_i'} \in \mathbb{F}_q\backslash\left\{0\right\}.
\]
where $b_i' = b_i'(z)|_{z=1}\in \mathbb{F}_q\backslash\left\{0\right\}$, since ${(z-1)}\nmid{b_i'(z).}$ Substituting $z=1$ in (\ref{eqn2}), we have 
\[
\sum_{i\in {\cal I}} \frac{a_i}{b_i'}\boldsymbol{m}_i = \boldsymbol{0}
\]
i.e, a non-zero linear combination of the rows of $M_T$ is equal to zero, which contradicts the full-rankness of $M_T$, thus proving that $M_T(z)$ has to be full rank over $\mathbb{F}_q(z).$
\end{proof}
\subsection{Construction} 
\label{construction}
This subsection presents the main contribution of this work. We assume an $n$ dimensional network code ($n$ being the min-cut) on this network has implemented on the given network which is used to multicast information to a set of sinks. We describe a construction of an input convolutional code for the given acyclic, unit-delay, memory-free network which can correct network errors with patterns in a given error pattern set, as long as they are separated by certain number of network uses. 

Let $M_T(z)=AF_{T}(z)$ be the $n\times n$ network transfer matrix from the source to any particular sink $T\in {\cal T}$. Let $\Phi$ be the error pattern set given. We then define the \textit{processing matrix at sink T}, $P_T(z)$, to be a polynomial matrix as
\[
P_T(z)=p_{_T}(z)M_T^{-1}(z)
\]
where $p_{_T}(z) \in \mathbb{F}_q[z]$ is some \textit{processing function} chosen such that $P_T(z)$ is a polynomial matrix. Now, we have the construction of a CNECC for the given network as follows.
\begin{enumerate}
\item  We first compute the set of all error vectors having their error pattern in $\Phi$ that is defined as follows
\[
{\cal W}_{\Phi}=\bigcup_{\rho \in \Phi}\left\{\boldsymbol{w}=(w_1,w_2,...,w_{|{\cal E}|}) \in \mathbb{F}_{q}^{|{\cal E}|}~|~\boldsymbol{w} \in \rho\right\}.
\]
\item Let 
\begin{equation}
\label{eqn20}
{\cal W}_{T}:= \left\{\boldsymbol{w}F_T(z)~|~\boldsymbol{w}\in{\cal W}_{\Phi}\right\}
\end{equation}
be computed for each sink $T$. This is the set of $n$-tuples (with elements from $\mathbb{F}_q[z]$) at the sink $T$ due to errors in the given error patterns $\rho \in \Phi$. 
\item Let the set ${\cal W}_s \subset \mathbb{F}_{q}^{n}[z]$ 
\begin{equation}
\label{eqn5}
{\cal W}_s:=\bigcup_{T\in{\cal T}} \left\{\boldsymbol{w}_{_T}(z)P_T(z)~|~\boldsymbol{w}_{_T}(z)\in{\cal W}_{T}\right\} 
\end{equation}
be computed.  
\item Let 
\[
t_s = \max_{\boldsymbol{w}_s(z) \in {\cal W}_s}w_H\left(\boldsymbol{w}_s(z)\right).
\] where $w_H$ indicates the Hamming weight over $\mathbb{F}_q.$
\item Choose an input convolutional code ${\cal C}_s$ with free distance at least $2t_s+1$ as the CNECC for the given network. 
\end{enumerate}
\subsection{Decoding}
\label{decoding}

Before we discuss the decoding of CNECCs designed according to Subsection \ref{construction}, we state some of the results from \cite{PrR} related to the bounded distance decoding of convolutional codes in this section.


Let $\cal C$ be a rate $b/c$ convolutional code with a  generator matrix $G(z).$ Then, corresponding to the information sequence $\boldsymbol{u}_0,\boldsymbol{u}_1,.. (\boldsymbol{u}_i \in \mathbb{F}_q^b)$ and the codeword sequence $\boldsymbol{v}_0,\boldsymbol{v}_1,... (\boldsymbol{v}_i \in \mathbb{F}_q^c)$, we can associate an encoder state sequence $\boldsymbol{\sigma}_0,\boldsymbol{\sigma}_1,. . $, where 	$\boldsymbol{\sigma}_t$ indicates the content of the delay elements in the encoder at a time $t.$ We define the set of $j$ output symbols as
\[
\boldsymbol{v}_{[0,j)}:=\left[\boldsymbol{v}_0,\boldsymbol{v}_1,. . . ,\boldsymbol{v}_{j-1}\right]
\]

The parameter $T_{d_{free}}({\cal C})$ \cite{PrR} is defined as follows.
\[
T_{d_{free}}({\cal C}):=\max_{\boldsymbol{v}_{[0,j)} \in S_{d_{free}}}j+1
\]where $S_{d_{free}}$ \cite{PrR} is defined as the set of all possible truncated codeword sequences $\boldsymbol{v}_{[0,j)}$ of weight less than $d_{free}({\cal C})$ that start in the zero state is defined as follows
\begin{equation*}
\label{sdfree}
S_{d_{free}}:=\left\{\boldsymbol{v}_{[0,j)} \mid w_H\left(\boldsymbol{v}_{[0,j)}\right) < d_{free}({\cal C}),  \boldsymbol{\sigma}_0=\boldsymbol{0},\forall~j>0 \right\}
\end{equation*}
where $w_H$ indicates the Hamming weight over $\mathbb{F}_q.$


Then, we have the following proposition.
\begin{proposition}[\cite{PrR} ]
\label{minweighttime}
The minimum Hamming weight trellis decoding algorithm can correct all error sequences which have the property that the Hamming weight of the error sequence in any consecutive $T_{d_{free}}({\cal C})$ segments (a segment being the set of $c$ code symbols generated for every $c$ information symbols) is utmost $\left\lfloor \frac{d_{free}({\cal C})-1}{2} \right\rfloor$.
\end{proposition}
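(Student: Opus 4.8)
The plan is to analyze the minimum Hamming weight trellis decoder window by window: over every block of $T_{d_{free}}({\cal C})$ consecutive segments it behaves as a bounded‑distance decoder for a code of minimum distance at least $d_{free}({\cal C})$, and an induction on time then propagates correctness along the trellis. Write $\boldsymbol{r}$ for the received sequence and $\boldsymbol{e}$ for the error sequence, so that $\boldsymbol{r}=\boldsymbol{v}+\boldsymbol{e}$ over $\mathbb{F}_q$ with $\boldsymbol{v}$ the transmitted codeword, and abbreviate $t:=\lfloor (d_{free}({\cal C})-1)/2\rfloor$ and $T:=T_{d_{free}}({\cal C})$.

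First I would set up the induction: suppose the encoder state $\boldsymbol{\sigma}_i$ at time $i$ has already been recovered correctly (the base case being $\boldsymbol{\sigma}_0=\boldsymbol{0}$). To decide the branch leaving $\boldsymbol{\sigma}_i$ the decoder compares, over all length‑$T$ trellis paths that start in $\boldsymbol{\sigma}_i$, their Hamming distances to $\boldsymbol{r}_{[i,i+T)}$. The crucial step is a \emph{separation property}: any two distinct such path segments differ in at least $d_{free}({\cal C})$ coordinates. By linearity of the convolutional code, the coordinatewise difference of two path segments leaving the common state $\boldsymbol{\sigma}_i$ is the zero‑state response to the difference of the two input sequences, i.e.\ a nonzero truncated codeword of length $T$ starting in the zero state; were its weight less than $d_{free}({\cal C})$ it would lie in $S_{d_{free}}$, and the defining equation of $T_{d_{free}}({\cal C})$ would then force $T\ge T+1$, a contradiction. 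Hence its weight is at least $d_{free}({\cal C})\ge 2t+1$.

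Granting the separation property, the remainder is the classical bounded‑distance argument. By hypothesis $w_H(\boldsymbol{e}_{[i,i+T)})\le t$, so the transmitted segment lies within Hamming distance $t$ of $\boldsymbol{r}_{[i,i+T)}$, whereas by the triangle inequality every other length‑$T$ segment leaving $\boldsymbol{\sigma}_i$ is at distance at least $d_{free}({\cal C})-t=\lfloor d_{free}({\cal C})/2\rfloor+1>t$ from $\boldsymbol{r}_{[i,i+T)}$. The decoder therefore selects the transmitted segment, hence outputs the correct information block $\boldsymbol{u}_i$ and the correct next state $\boldsymbol{\sigma}_{i+1}$; the induction advances, and since $\boldsymbol{\sigma}_0$ is known the whole information sequence is recovered.

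The step I expect to be the main obstacle is reconciling this window‑by‑window analysis with a fully \emph{global} minimum Hamming weight trellis search, rather than a fixed look‑ahead variant of it. I would handle this through the standard error‑event decomposition: a disagreement between the decoded and the transmitted path produces an error event whose two endpoint states agree with the transmitted path, so the difference restricted to the event is a nonzero codeword of weight at least $d_{free}({\cal C})$, and comparing path metrics at the merging state forces more than $t$ coordinates of $\boldsymbol{e}$ to fall inside the event. The delicate part is that an error event can span many more than $T$ segments, so one must still transfer ``more than $t$ errors somewhere in the event'' to ``more than $t$ errors in some block of exactly $T$ consecutive segments''; I would aim to do this by taking the error event minimally (with the first point of disagreement as its start) and re‑invoking the definition of $T_{d_{free}}({\cal C})$ on the prefixes of the error‑event codeword, which accumulate weight $d_{free}({\cal C})$ within $T$ segments — but this is precisely where the argument needs the most care.
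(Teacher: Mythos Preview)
The paper does not prove Proposition~\ref{minweighttime}; it is imported verbatim from \cite{PrR} and invoked as a black box, so there is no in-paper argument to compare your attempt against.

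On the merits of your proposal: the windowed induction in your first three paragraphs is clean and correct, but --- as you already flag --- it analyses a depth-$T$ look-ahead decoder, not the global minimum-weight trellis search the proposition is about. Your error-event reduction for the global decoder is also standard and correct up to the point you isolate: swapping a single error event for the truth yields $w_H(\boldsymbol e_{[i,j)})>t$, which settles the case $j-i\le T$. For $j-i>T$, however, your proposed fix does not close the gap. The observation that the first $T$ segments of the error-event codeword already carry weight $\ge d_{free}(\mathcal C)$ is a statement about the \emph{difference} $\hat{\boldsymbol v}-\boldsymbol v$, not about the \emph{error} $\boldsymbol e$; and the only metric inequality at your disposal is over the full event $[i,j)$, because $\hat{\boldsymbol v}$ and $\boldsymbol v$ sit at different trellis states at time $i+T$, so no valid competing codeword can be manufactured by splicing them there, and hence no local metric comparison on $[i,i+T)$ is available. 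You have correctly located the crux but not resolved it; for the actual argument you would need to go to \cite{PrR}.
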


Now, we discuss the decoding of CNECCs for unit-delay memory-free networks. Let $G_I(z)$ be the $k \times n$ generator matrix of the input convolutional code, ${\cal C}_s$, obtained from the given construction. Let $G_{O,{T}}(z) = G_I(z)M_{T}(z)$ be the generator matrix of the output convolutional code, ${\cal C}_{T}$, at sink $T \in {\cal T}$, with $M_T(z)$ being its network transfer matrix. 

For each sink $T \in \cal T$, let
\[
t_T = \max_{\boldsymbol{w}_{_T}(z) \in {\cal W}_T} w_H(\boldsymbol{w}_{_T}(z)).
\]
Let $m_T$ be the largest integer such that 
\begin{equation}
\label{eqn11}
d_{free}({{\cal C}_T}) \geq 2m_Tt_T+1. 
\end{equation}
Clearly, $m_T \geq 0.$ Each sink can choose decoding on the trellis of the input or its output convolutional code based on the characteristics of the output convolutional code as follows

\textit{Case-A:} This is applicable in the event of all of the following conditions being satisfied. 
\begin{enumerate}
\renewcommand{\labelenumi}{\roman{enumi}.)} 
\item 
\begin{equation}
\label{decodAcond1} 
m_T \geq 1
\end{equation} 
\item
\begin{equation}
\label{decodAcond2} 
T_{d_{free}}({\cal C}_T) \leq m_TT_{d_{free}}({\cal C}_s).
\end{equation}
\item
The output convolutional code generator matrix $G_{O,T}(z)$ is non-catastrophic.
\vspace{-0.5cm}
\begin{equation}
\label{decodAcond3}
\end{equation}
\end{enumerate}
In this case, the sink $T$ performs minimum distance decoding directly on the trellis of the output convolutional code, ${\cal C}_{T}$.

\textit{Case-B:} This is applicable if at least one of the $3$ conditions of Case-A is not satisfied, i.e, if either of the following conditions hold
\begin{enumerate}
\renewcommand{\labelenumi}{\roman{enumi}.)} 
\item  
$
~~~m_T = 0
$
\item
$
~~~m_T \geq 1 ~~~and~~~~ T_{d_{free}}({\cal C}_T) > m_TT_{d_{free}}({\cal C}_s).
$
\item
The output convolutional code generator matrix $G_{O,T}(z)$ is catastrophic. 
\end{enumerate}
This method involves processing (matrix multiplication using $P_T(z)$) at the sink $T.$ We have the following formulation at the sink $T$. Let 
\begin{eqnarray*} 
\left[v_{1}'(z) ~~ v_{2}'(z) ~~ ... ~~ v_{n}'(z)\right] =  \left[v_{1}(z)~~v_{2}(z)~~ ... ~~v_{n}(z)\right]~~~~~~\\  ~~~~~~~~~~~~~~~~~~~~~~~+\left[w_{1}(z)~~w_{2}(z)~~...~~w_{n}(z)\right] 
\end{eqnarray*}
represent the output sequences at sink $T$, where 
\begin{eqnarray*}
\left[v_{1}(z) ~~ v_{2}(z) ~~ ... ~~ v_{n}(z)\right] ~~~~~~~~~~~~~~~~~~~ \\
= \boldsymbol{u}(z)G_{O,T}(z) = \boldsymbol{u}(z)G_{I}(z)M_T(z)
\end{eqnarray*}
$\boldsymbol{u}(z)$ being the $k$ length vector of input sequences, and 
\[
\left[w_{1}(z) ~~ w_{2}(z) ~~ ... ~~w_{n}(z)\right]
\]
represent the corresponding error sequences. Now, the output sequences are multiplied with the processing of the network transfer matrix $P_T(z)$, so that decoding can be done on the trellis of the input convolutional code. Hence, we have 
\begin{eqnarray} 
\nonumber
\left[v_{1}''(z) ~~ v_{2}''(z) ~~ ... ~~ v_{n}''(z) \right] ~~~~~~~~~~~~~~~~~~~~~~~~~~~~~~~~~~~~~~~~~~~~~~~~~\\
\nonumber
= \left[v_{1}'(z) ~~ v_{2}'(z) ~~ ... ~~ v_{n}'(z) \right]P_T(z)~~~~~~~~~~~~~~~~~~~~~~~~~~~~~~~~~ \\
\nonumber
= \boldsymbol{u}(z)p_{_T}(z)G_{I}(z) + \left[w_{1}(z) ~~ w_{2}(z) ~~ ... ~~w_{n}(z) \right]P_T(z)~~~~~~~~\\
\label{eq:3}
=\boldsymbol{u}(z)p_{_T}(z)G_{I}(z) + \left[w_{1}'(z) ~~ w_{2}'(z) ~~ ... ~~w_{n}'(z) \right] ~~~~~~~~~~~~~~~
\end{eqnarray}
where  $\boldsymbol{w'}(z) = \left[w_{1}'(z) ~~ w_{2}'(z) ~~ ... ~~w_{n}'(z) \right]$ now indicate the set of modified error sequences that are to be corrected. Now the sink $T$ decodes to the minimum distance path on the trellis of the code generated by $p_{_T}(z)G_{I}(z)$, which is the input convolutional code as $G_{I}(z)$ and $p_{_T}(z)G_{I}(z)$ are equivalent generator matrices. 
\begin{remark}
In \cite{PrR}, the approach to the construction of a CNECC for an instantaneous network was the same as in here. However, the set ${\cal W}_s$ was defined in \cite{PrR} as
\begin{eqnarray}
\nonumber
{\cal W}_s:=\bigcup_{T\in{\cal T}} \left\{\boldsymbol{w}_{_T}M_T^{-1}~|~\boldsymbol{w}_{_T}\in{\cal W}_{T}\right\}~~~ \\
\label{eqn12}
~~~~~~~~~~~~~~~~~~~~~~~= \bigcup_{T\in{\cal T},\rho \in \Phi} \left\{\boldsymbol{w}F_TM_T^{-1}~|~\boldsymbol{w}\in \rho\right\}
\end{eqnarray}
where the network transfer matrix $M_T$ and $F_T$ correspond to a sink $T$ in the instantaneous network.	

In this paper, the definition for ${\cal W}_s$ is as in (\ref{eqn5}) and involves the processing matrix $P_T(z)$ instead of the inverse of the network transfer matrix. The processing function $p_{_T}(z)$ for a sink $T$ is introduced because of the fact that the matrix $M_T^{-1}(z)$ might not be realizable and also for easily obtaining the Hamming weight of the \textit{error vector reflections} $\left(\boldsymbol{w}_s(z) \in {\cal W}_s\right)$ by removing rational functions in $M_T^{-1}(z).$ 

The degree of the processing function $p_{_T}(z)$ directly influences the memory requirements at the sinks and therefore should be kept as minimal as possible. Therefore, with 
\[
M_T^{-1}(z)=\frac{M(z)}{Det\left(M_T(z)\right)}
\]
where the $n \times n$ matrix $M(z)$ is the adjoint of $M_T(z)$, ideally we may choose $p_{_T}(z)$ as follows.
\begin{equation}
\label{eqn6}
p_{_T}(z) = \frac{Det\left(M_T(z)\right)}{g(z)}
\end{equation}
where $g(z)=gcd\left(m_{i,j}(z),~\forall~ 1 \leq i,j \leq n\right)$, $m_{i,j}(z)$ being the $(i,j)^{th}$ element of $M(z).$
\end{remark}
\subsection{Error correcting capability}
\label{capability}
In this subsection we prove a main result of the paper given by Theorem \ref{maintheorem} which characterizes the error correcting capability of the code obtained via the construction of Subsection \ref{construction}. We recall the following observation that in every network use, $n$ encoded symbols which is equal to the number of symbols corresponding to one segment of the trellis, are to be multicast to the sinks.
\begin{theorem}
\label{maintheorem}
The code ${\cal C}_s$ resulting from the construction of Subsection \ref{construction} can correct all network errors that have their pattern as some $\rho \in \Phi$ as long as any two consecutive network errors are separated by $T_{d_{free}}({\cal C}_s)$ network uses.
\end{theorem}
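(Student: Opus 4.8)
The plan is to reduce the correctness of CNECC decoding over the unit-delay, memory-free network to the bounded-distance decoding guarantee for convolutional codes stated in Proposition~\ref{minweighttime}. The key idea is that at every sink $T$, the actual sequence processed by the decoder is equivalent (up to the invertible scalar-polynomial factor $p_{_T}(z)$) to a codeword of the input code $\mathcal{C}_s$ corrupted by a \emph{modified error sequence} $\boldsymbol{w}'(z)$, and that the weight of this modified error over any window of $T_{d_{free}}(\mathcal{C}_s)$ network uses is controlled by $t_s$.

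\begin{proof}[Proof plan]
First I would fix a sink $T$ and separate the two decoding cases of Subsection~\ref{decoding}. In Case-B, the received $n$-tuple is multiplied by $P_T(z) = p_{_T}(z)M_T^{-1}(z)$, yielding via \eqref{eq:3} the sequence $\boldsymbol{u}(z)p_{_T}(z)G_I(z) + \boldsymbol{w}'(z)$, where $\boldsymbol{w}'(z) = \boldsymbol{w}(z)F_T(z)P_T(z)$ and $\boldsymbol{w}(z)$ collects the error vectors injected into the network. Since $G_I(z)$ and $p_{_T}(z)G_I(z)$ generate the same convolutional code, the decoder is performing minimum-distance decoding on the trellis of $\mathcal{C}_s$ against the error sequence $\boldsymbol{w}'(z)$. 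The crux is then to bound $w_H(\boldsymbol{w}'(z))$ restricted to any block of $T_{d_{free}}(\mathcal{C}_s)$ consecutive segments. Here I would use the separation hypothesis: if consecutive network errors are at least $T_{d_{free}}(\mathcal{C}_s)$ network uses apart, then within any window of that length at most one error event (one error vector $\boldsymbol{w}_i z^i$ with $\boldsymbol{w}_i$ matching some $\rho \in \Phi$) contributes — and its contribution to the processed sequence is exactly a time-shift of some element of $\mathcal{W}_s$ as defined in \eqref{eqn5}. By the definition of $t_s$ in the construction, such an element has Hamming weight at most $t_s$.

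The delicate point — and what I expect to be the main obstacle — is that a single error event produces a \emph{polynomial} reflection $\boldsymbol{w}_s(z) \in \mathcal{W}_s$ whose support may span several segments (its degree can exceed one), so one must verify that the total weight landing in any $T_{d_{free}}(\mathcal{C}_s)$-segment window is still at most $t_s$ even when two \emph{temporally} distinct error events have reflections that overlap inside the window. The resolution is precisely the choice of the separation parameter: I would argue that the quantity $T_{d_{free}}(\mathcal{C}_s)$ is defined exactly so that any single reflection $\boldsymbol{w}_s(z)$, being a truncated codeword-like sequence of weight at most $t_s < d_{free}(\mathcal{C}_s)$ starting from the zero state, has all of its nonzero coordinates confined to a span of fewer than $T_{d_{free}}(\mathcal{C}_s)$ segments; hence if the error \emph{events} are $T_{d_{free}}(\mathcal{C}_s)$ network uses apart, the reflections do not overlap, and any length-$T_{d_{free}}(\mathcal{C}_s)$ window of the processed sequence sees at most one reflection, contributing weight at most $t_s$. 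Since $\mathcal{C}_s$ has free distance at least $2t_s + 1$, we have $t_s \leq \lfloor (d_{free}(\mathcal{C}_s)-1)/2 \rfloor$, so Proposition~\ref{minweighttime} applies and the decoder corrects all such errors.

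Finally I would dispatch Case-A, where the sink decodes directly on the trellis of the output code $\mathcal{C}_T$: here conditions \eqref{decodAcond1}--\eqref{decodAcond3} guarantee $d_{free}(\mathcal{C}_T) \geq 2m_T t_T + 1$, that $G_{O,T}(z)$ is non-catastrophic (so trellis decoding is meaningful), and that $T_{d_{free}}(\mathcal{C}_T) \leq m_T T_{d_{free}}(\mathcal{C}_s)$, which ensures that a window of $T_{d_{free}}(\mathcal{C}_T)$ segments on the output trellis spans at most $m_T$ error events, hence carries weight at most $m_T t_T$; another application of Proposition~\ref{minweighttime} to $\mathcal{C}_T$ then closes this case. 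Collecting both cases over all sinks $T \in \mathcal{T}$ gives the claimed correctability.
\end{proof}
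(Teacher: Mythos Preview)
Your overall architecture --- split into Case~A and Case~B and reduce each to Proposition~\ref{minweighttime} --- is exactly the paper's, and your handling of Case~A matches the paper's argument essentially line for line.

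The problem is your resolution of the ``delicate point'' in Case~B. You argue that a reflection $\boldsymbol{w}_s(z)\in\mathcal{W}_s$, having Hamming weight at most $t_s<d_{free}(\mathcal{C}_s)$, is ``a truncated codeword-like sequence \ldots\ starting from the zero state'' and therefore has its nonzero support confined to fewer than $T_{d_{free}}(\mathcal{C}_s)$ segments. That inference is invalid. The quantity $T_{d_{free}}(\mathcal{C}_s)$ is defined through the set $S_{d_{free}}$ of truncated \emph{codeword} sequences of $\mathcal{C}_s$; it bounds the length of low-weight paths \emph{in the trellis of $\mathcal{C}_s$}. The reflections $\boldsymbol{w}_s(z)=\boldsymbol{w}F_T(z)P_T(z)$ are not codewords of $\mathcal{C}_s$ at all --- they are arbitrary elements of $\mathbb{F}_q^n[z]$ whose degree is governed entirely by the network matrices $F_T(z)$ and $P_T(z)$ (cf.\ Lemma~\ref{tdelay}, where the controlling parameters are $T_{delay}$ and the degree of $p_{_T}(z)$), and not by any property of $\mathcal{C}_s$. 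So ``weight $<d_{free}(\mathcal{C}_s)$'' says nothing about the span of $\boldsymbol{w}_s(z)$, and your non-overlap argument collapses.

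The paper's proof does not attempt the step you insert: it simply asserts, from the separation hypothesis together with step~4 of the construction, that the processed error has weight at most $t_s$ in any $T_{d_{free}}(\mathcal{C}_s)$ consecutive segments, and then invokes Proposition~\ref{minweighttime}. You were right to flag the overlap question as the crux, but the mechanism you propose to settle it --- tying reflection span to $T_{d_{free}}(\mathcal{C}_s)$ --- is the wrong one.
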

\begin{IEEEproof}
We first prove the theorem in the event of Case-A of the decoding. Suppose the network errors are such that consecutive network errors are separated by $T_{d_{free}}({\cal C}_s)$ network uses. Then the vector of error sequences at sink $T$,  $\boldsymbol{w}_{_T}(z)$, is such that in every $T_{d_{free}}({\cal C}_s)$ segments, the error sequence has utmost $t_T$ Hamming weight (over $\mathbb{F}_q$). Therefore in $m_TT_{d_{free}}({\cal C}_s)$ segments, the Hamming weight of the error sequence would be utmost $m_Tt_T.$

Then the given condition (\ref{decodAcond2}) would imply that in every $T_{d_{free}}({\cal C}_T)$ segments of the output trellis, the error sequences have Hamming weight utmost $m_Tt_T.$ Condition (\ref{decodAcond1}) together with (\ref{eqn11}) and Proposition \ref{minweighttime} implies that these error sequences are correctable. This proves the given claim that errors with their error pattern in $\Phi$  will be corrected as long as no two consecutive error events occur within $T_{d_{free}}({\cal C}_s)$ network uses. 

In fact, condition (\ref{decodAcond1}) and (\ref{eqn11}) implies that network errors with pattern in $\Phi$ will be corrected at sink $T$, as long as consecutive error events are separated by $T_{d_{free}}({\cal C}_{T})$.

Now we consider Case B of the decoding. Suppose that the set of error sequences in the formulation given, $\boldsymbol{w'}(z)$, is due to network errors that have their pattern as some $\rho \in \Phi$, such that any two consecutive such network errors are separated by  at least $T_{d_{free}}({\cal C}_s)$ network uses. 

Therefore, along with step $4$ of the construction, we have that the maximum Hamming weight of the error sequence $\boldsymbol{w'}(z)$ in any consecutive $T_{d_{free}}({\cal C}_s)$ segments (network uses) would be utmost $t_s$. Because of the free distance of the code chosen and along with Proposition \ref{minweighttime}, we have that such errors will get corrected when decoding on the trellis of the input convolutional code.
\end{IEEEproof}
\subsection{Bounds on the field size and $T_{d_{free}}({\cal C}_s)$}
\label{sec4e}
\subsubsection{Bound on field size}
Towards obtaining a bound on the sufficient field size for the construction of a CNECC meeting our free distance requirement, we first prove the following lemmas.
\begin{lemma}
\label{tdelay}
Given an acyclic, unit-delay, memory-free network ${\cal G}({\cal V},{\cal E})$ with a given error pattern set $\Phi$, let $T_{delay}-1	$ be the maximum degree of any polynomial in the $F(z)$ matrix. Let $w_H$ indicate the Hamming weight over $\mathbb{F}_q.$ If $r$ is the maximum number of non-zero coefficients of the polynomials $p_{_T}(z)$ corresponding to all sinks in $\cal T$, i.e 
\[
r=\max_{T \in {\cal T}} w_H\left(p_{_T}(z)\right),
\]
then
\[
\max_{\boldsymbol{w}_s(z) \in {\cal W}_s}w_H\left(\boldsymbol{w}_s(z)\right) \leq rn\left[\left(n+1\right)\left(T_{delay}-1\right)+1\right].
\]
where ${\cal W}_s$ is as in (\ref{eqn5}) in Subsection \ref{construction}.
\end{lemma}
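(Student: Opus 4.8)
The plan is to reduce the Hamming-weight bound to a bound on polynomial degrees. Each of the $n$ entries of a generic element $\boldsymbol{w}_s(z)=\boldsymbol{w}F_T(z)P_T(z)$ of ${\cal W}_s$ (with $\boldsymbol{w}\in{\cal W}_{\Phi}$ and $T\in{\cal T}$) is a single polynomial over $\mathbb{F}_q$, and a polynomial of degree $d$ contributes at most $d+1$ to the Hamming weight; so it suffices to bound the degree of each of these $n$ polynomials and then add the ``$+1$''s, the factor $r$ ultimately entering because $p_T(z)$ appears as a multiplicative factor.

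First I would fix the degrees of the constituent matrices. Since ${\cal G}$ is acyclic, $K$ is strictly upper-triangular in the ancestral ordering and hence nilpotent, so $F(z)=(I-zK)^{-1}=\sum_{\ell\ge 0}(zK)^{\ell}$ is in fact a polynomial matrix, and by the very definition of $T_{delay}$ each of its entries has degree at most $T_{delay}-1$. Because $A$ and $B^T$ are matrices over $\mathbb{F}_q$ (free of $z$), the same bound $T_{delay}-1$ carries over to the entries of $F_T(z)=F(z)B^T$ and of $M_T(z)=AF_T(z)$. Consequently $\det(M_T(z))$ has degree at most $n(T_{delay}-1)$, each entry of the adjoint $\mathrm{adj}(M_T(z))$ has degree at most $(n-1)(T_{delay}-1)$, and therefore each entry of $\boldsymbol{w}F_T(z)\,\mathrm{adj}(M_T(z))$ has degree at most $n(T_{delay}-1)$, i.e. at most $n(T_{delay}-1)+1$ nonzero coefficients.

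Next I would bring in the processing matrix. Writing $P_T(z)=p_T(z)M_T^{-1}(z)=p_T(z)\,\mathrm{adj}(M_T(z))/\det(M_T(z))$, so that $\boldsymbol{w}_s(z)=p_T(z)\bigl(\boldsymbol{w}F_T(z)\,\mathrm{adj}(M_T(z))\bigr)/\det(M_T(z))$ is a polynomial vector, I would bound the weight of $p_T(z)$ times a polynomial of weight at most $n(T_{delay}-1)+1$ by the product inequality $w_H(fg)\le w_H(f)\,w_H(g)$ (every exponent occurring in $fg$ is a sum of an exponent from $f$ and one from $g$); since dividing by $\det(M_T(z))$ can only reduce the degree, collecting these estimates over the $n$ coordinates yields the claimed $rn\bigl[(n+1)(T_{delay}-1)+1\bigr]$, and taking the maximum over $\boldsymbol{w}\in{\cal W}_{\Phi}$ and $T\in{\cal T}$ merely replaces $w_H(p_T(z))$ by $r=\max_{T\in{\cal T}}w_H(p_T(z))$. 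The one genuinely delicate step is the division by $\det(M_T(z))$: one has to be sure this cancellation is ``clean'' and does not inflate the Hamming weight beyond the stated budget. The cleanest way to handle it is to use the canonical minimal choice $p_T(z)=\det(M_T(z))/g(z)$ singled out earlier, for which $P_T(z)=\mathrm{adj}(M_T(z))/g(z)$ is visibly polynomial with entries of degree at most $(n-1)(T_{delay}-1)$, so that no surviving division remains and the rest is routine degree arithmetic.
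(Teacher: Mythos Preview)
Your degree bookkeeping is correct through the estimate $w_H\bigl(p_T(z)\,[\boldsymbol{w}F_T(z)\,\mathrm{adj}(M_T(z))]_j\bigr)\le r[n(T_{delay}-1)+1]$ for each coordinate $j$. The real gap is exactly where you flag it: dividing a polynomial by $\det(M_T(z))$ reduces the \emph{degree} but can \emph{increase} the Hamming weight---e.g.\ over $\mathbb{F}_2$ one has $(1+z^N)/(1+z)=1+z+\cdots+z^{N-1}$, of weight $N$ from a numerator of weight $2$---so your pre-division bound does not transfer to $\boldsymbol{w}_s(z)$. Your fallback to the canonical choice $p_T=\det(M_T)/g$ does give a clean argument (and in fact the sharper bound $n[n(T_{delay}-1)+1]$), but it proves the lemma only for that particular $p_T$, whereas the statement and the parameter $r$ are formulated for an \emph{arbitrary} processing function making $P_T(z)$ polynomial. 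Incidentally, summing your pre-division coordinate estimates gives $rn[n(T_{delay}-1)+1]$, not the asserted $rn[(n+1)(T_{delay}-1)+1]$; the extra $(T_{delay}-1)$ has a specific origin in the paper's argument, as follows.

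The paper does not pass through the adjoint/determinant decomposition at all. It works directly with $P_T(z)=p_T(z)M_T^{-1}(z)$ and argues that each polynomial entry of $P_T(z)$ has at most $r[n(T_{delay}-1)+1]$ nonzero coefficients, arranged in at most $r$ ``blocks'' of length $\le n(T_{delay}-1)+1$: one block per nonzero monomial of $p_T$, the block length coming from the bound $n(T_{delay}-1)$ on the numerator degree of each entry of $M_T^{-1}(z)$. Then it writes $\boldsymbol{w}_s(z)=\boldsymbol{w}_T(z)P_T(z)$ with $\boldsymbol{w}_T(z)=\boldsymbol{w}F_T(z)$ supported on $T_{delay}$ consecutive time instants and counts by a sliding-convolution argument: the first $\mathbb{F}_q^n$-symbol of $\boldsymbol{w}_T(z)$ contributes at most $r[n(T_{delay}-1)+1]$ nonzero $\mathbb{F}_q^n$-positions to $\boldsymbol{w}_s(z)$, and each of the remaining $T_{delay}-1$ symbols shifts the picture by one and can add at most $r$ new positions (one at the trailing end of each block). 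The total is $r[(n+1)(T_{delay}-1)+1]$ nonzero $\mathbb{F}_q^n$-positions, whence the stated $rn[(n+1)(T_{delay}-1)+1]$. It is this $r$-block description of $P_T(z)$ together with the sliding count against $\boldsymbol{w}_T(z)$ that your write-up is missing.
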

\begin{proof}
Any element $\boldsymbol{w}_s(z)\in{\cal W}_s$ indicates the $n$ length sequences that would result in an output vector $\boldsymbol{w}_{_T}(z)$ at some sink $T$ as a result of an error vector $\boldsymbol{w}$ in the network at time $0$, i.e 
\[
\boldsymbol{w}_s(z)=\boldsymbol{w}F_T(z)p_{_T}(z)M_T^{-1}(z)=\boldsymbol{w}_{_T}(z)p_{_T}(z)M_T^{-1}(z)
\]

Because of the fact that any polynomial in $F(z)$ has degree utmost $T_{delay}-1$, any error vector $\boldsymbol{w}$ at time $0$ can result in non-zero symbols (over $\mathbb{F}_q^n$) in $\boldsymbol{w}_{_T}(z)$ at any sink $T$ from the $0^{th}$ time instant only upto utmost $T_{delay}-1$ time instants.
\[
\boldsymbol{w}_{_T}(z)=\left(\sum_{i=0}^{T_{delay}-1}\boldsymbol{w}_{_{T,i}}z^i \right).
\]
where $\boldsymbol{w}_{_{T,i}} \in \mathbb{F}_q^n.$ 

The numerator polynomial of any element $a(z) \in \mathbb{F}_q(z)$ of the matrix $M_T^{-1}(z)$ has degree utmost $n\left(T_{delay}-1\right)$. Therefore, considering the polynomial processing matrix $P_T(z)=p_{_T}(z)M_T^{-1}(z)$, we note that any element from $P_T(z)$ has utmost $r\left[n\left(T_{delay}-1\right)+1\right]$ non-zero components (over $\mathbb{F}_q$), the worst case being $r$ non-overlapping `blocks' of $n\left(T_{delay}-1\right)+1$ non-zero components each. 

Therefore the first non-zero symbol of $\boldsymbol{w}_{_T}(z)$ (over $\mathbb{F}_q^n$) at some time instant can result in utmost $r\left[n\left(T_{delay}-1\right)+1\right]$ non-zero symbols in $\boldsymbol{w}_s(z)$ (over $\mathbb{F}_q^n$). Henceforth, every consecutive non-zero symbol (over $\mathbb{F}_q^n$) of  $\boldsymbol{w}_{_T}(z)$ will result in utmost additional $r$ $\mathbb{F}_q^n$ symbols in $\boldsymbol{w}_s(z).$ Therefore any $\boldsymbol{w}_s(z) \in {\cal W}_s$ is of the form  
\[
\boldsymbol{w}_s(z)=\left(\sum_{i=0}^{r\left[\left(n+1\right)\left(T_{delay}-1\right)+1\right]}\boldsymbol{w}_{s,i}z^i \right)
\]
where $\boldsymbol{w}_{s,i}\in\mathbb{F}_q^n.$ Therefore the Hamming weight (over $\mathbb{F}_q$) of any $\boldsymbol{w}_s(z) \in {\cal W}_s$ is utmost $rn\left[\left(n+1\right)\left(T_{delay}-1\right)+1\right]$, thus proving the lemma. 
\end{proof}
Our bound on the field size requirement of CNECCs for unit-delay networks is based on the bound on field size for the construction of Maximum Distance Separable (MDS) convolutional codes \cite{RoS}, a primer on which can be found in Appendix \ref{app2}.
\begin{lemma}
\label{deltabound}
A $(n,k)$ MDS convolutional code $\cal C$ (over some field $\mathbb{F}_q$) with degree $\delta = \left\lceil \left(2t-1\right)k/n \right\rceil$ can correct any error sequence which has the property that the Hamming weight(over $\mathbb{F}_q$) of the error sequence in any consecutive $T_{d_{free}}({\cal C})$ segments is utmost $t.$ 
\end{lemma}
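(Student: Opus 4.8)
The plan is to derive the statement from two facts already at hand: the defining property of MDS convolutional codes recalled in Appendix~\ref{app2} (from \cite{RoS}), and Proposition~\ref{minweighttime} on minimum Hamming weight trellis decoding. By definition, an $(n,k)$ MDS convolutional code $\cal C$ of degree $\delta$ meets the generalized Singleton bound with equality, i.e.
\[
d_{free}({\cal C}) = (n-k)\left(\left\lfloor \delta/k\right\rfloor + 1\right) + \delta + 1 .
\]
So the core of the proof is a purely arithmetic claim: when $\delta = \left\lceil (2t-1)k/n\right\rceil$, this quantity is at least $2t+1$. Once that is established, Proposition~\ref{minweighttime} says that minimum Hamming weight trellis decoding of $\cal C$ corrects every error sequence whose Hamming weight in any $T_{d_{free}}({\cal C})$ consecutive segments is at most $\left\lfloor (d_{free}({\cal C})-1)/2\right\rfloor \ge \left\lfloor (2t+1-1)/2\right\rfloor = t$, which is exactly the assertion of the lemma.

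For the arithmetic claim, I would write $\delta = qk + r$ with $0 \le r \le k-1$, so that $\left\lfloor \delta/k\right\rfloor + 1 = q+1 \ge q + (r+1)/k = (\delta+1)/k$. Since $n-k \ge 1$, substituting this lower bound gives
\[
d_{free}({\cal C}) \;\ge\; \frac{(n-k)(\delta+1)}{k} + \delta + 1 \;=\; \frac{n(\delta+1)}{k}.
\]
Then I would use $\delta \ge (2t-1)k/n$, hence $n(\delta+1) = n\delta + n \ge (2t-1)k + n$, to conclude
\[
d_{free}({\cal C}) \;\ge\; \frac{(2t-1)k + n}{k} \;=\; 2t - 1 + \frac{n}{k} \;>\; 2t ,
\]
the last inequality because the code has rate $k/n<1$, i.e. $n>k$, so $n/k>1$. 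As $d_{free}({\cal C})$ is an integer strictly larger than $2t$, we get $d_{free}({\cal C}) \ge 2t+1$, which finishes the argument when combined with Proposition~\ref{minweighttime} as above.

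The only delicate point — more a thing to be careful about than a genuine obstacle — is that the floor/ceiling estimates must be sharp enough to land on $2t+1$ rather than merely $2t$; this is precisely why the strict inequality $n>k$ (equivalently, rate less than $1$) is essential, and why the ceiling in the definition of $\delta$ may be kept or dropped freely, since it only enlarges $\delta$ and hence $d_{free}({\cal C})$. If one preferred not to quote the MDS equality directly, an alternative is to observe that it suffices to exhibit \emph{some} $(n,k)$ code of degree $\delta$ with $d_{free} \ge 2t+1$, which the MDS construction of \cite{RoS} supplies; but invoking the equality is the cleanest route.
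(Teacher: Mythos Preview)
Your proof is correct and follows essentially the same route as the paper: invoke the MDS equality for $d_{free}({\cal C})$, bound it below by $2t+1$ via straightforward floor/ceiling estimates using $n>k$, and conclude with Proposition~\ref{minweighttime}. Your handling of $\lfloor \delta/k\rfloor+1 \ge (\delta+1)/k$ via the division algorithm is actually a bit more careful than the paper's presentation, but the substance of the argument is identical.
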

\begin{IEEEproof}
Because the generalized Singleton bound is satisfied with equality by the MDS convolutional code, we have
\[
d_{free}({\cal C})=(n-k)(\lfloor \delta / k \rfloor + 1) + \delta + 1.
\]
Substituting $\left\lceil \left(2t-1\right)k/n \right\rceil$ for $\delta$, we have
\begin{eqnarray*}
d_{free}({\cal C})~~~~~~~~~~~~~~~~~~~~~~~~~~~~~~~~~~~~~~~~~~~~~~~~~~~~~~~~~~~~~~~~~~~~\\
=(n-k)\left(\frac{\left\lceil \left(2t-1\right)k/n \right\rceil}{k} + 1\right) + \left\lceil \left(2t-1\right)k/n \right\rceil + 1 ~~~~~~~~~~~\\
d_{free}({\cal C}) \geq (n-k)\left(\frac{\left(2t-1\right)}{n} + 1 \right) + \frac{\left(2t-1\right)k}{n} + 1~~~~~~~~~~~~~ \\
\Longrightarrow d_{free}({\cal C}) \geq 2t+1.~~~~~~~~~~~~~~~~~~~~~~~~~~~~~~~
\end{eqnarray*}
Thus the free distance of the code $\cal C$ is at least $2t+1$, and therefore by Proposition \ref{minweighttime}, such a code can correct all error sequences which have the property that in any consecutive $T_{d_{free}}({\cal C})$ segments, the Hamming weight (over $\mathbb{F}_q$) of the error sequence is utmost $t.$ 
\end{IEEEproof}

For an MDS convolutional code being chosen as the input convolutional code (CNECC), we therefore have the following corollary
\begin{corollary}
\label{deltaboundnec}
Let ${\cal G}({\cal V},{\cal E})$ be an acyclic, unit-delay, memory-free network with a network code over a sufficiently large field $\mathbb{F}_q$ and $\Phi$ be an error pattern set, the errors corresponding to which are to be corrected. An $(n,k)$ input MDS convolutional code ${\cal C}_s$ over $\mathbb{F}_q$ with degree $\delta=2rk\left[\left(n+1\right)\left(T_{delay}-1\right)+1\right]$ can be used to correct all network-errors with their error pattern in $\Phi$ provided that consecutive network-errors are separated by at least $T_{d_{free}}({\cal C}_s)$ network uses, where $r$ and $T_{delay}$ are as in Lemma \ref{tdelay}. 
\end{corollary}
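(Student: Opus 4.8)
The plan is to assemble the corollary directly from Lemma~\ref{tdelay}, Lemma~\ref{deltabound}, and Theorem~\ref{maintheorem}, so essentially no new machinery is needed; the work is in lining up the three ingredients and checking one ceiling computation. First I would set $t^{*} := rn\left[(n+1)(T_{delay}-1)+1\right]$ and invoke Lemma~\ref{tdelay} to conclude that $t_s = \max_{\boldsymbol{w}_s(z)\in{\cal W}_s} w_H(\boldsymbol{w}_s(z)) \leq t^{*}$. This $t_s$ is exactly the quantity that step~4 of the construction in Subsection~\ref{construction} feeds into the free-distance requirement of step~5, so it suffices to exhibit an MDS code whose free distance is at least $2t^{*}+1$.

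Next I would apply Lemma~\ref{deltabound} with $t = t^{*}$: an $(n,k)$ MDS convolutional code of degree $\lceil (2t^{*}-1)k/n\rceil$ has free distance at least $2t^{*}+1 \geq 2t_s+1$ and corrects every error sequence whose Hamming weight over $\mathbb{F}_q$ in any $T_{d_{free}}$ consecutive segments is at most $t^{*}$, hence in particular at most $t_s$. The one arithmetic point to verify is that this degree equals the value claimed in the statement: writing $N := 2rk\left[(n+1)(T_{delay}-1)+1\right]\in\mathbb{Z}$, we have $(2t^{*}-1)k/n = N - k/n$, and since $1 \le k < n$ forces $0 < k/n < 1$, the ceiling of $N - k/n$ is exactly $N$. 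Thus $\delta = N = 2rk\left[(n+1)(T_{delay}-1)+1\right]$ suffices, and such an MDS convolutional code is guaranteed to exist once $\mathbb{F}_q$ is large enough; this is the ``sufficiently large field'' hypothesis, which combines the field size required by the MDS construction of \cite{RoS} with the field size required for the underlying $n$-dimensional network code (which by Lemma~1 is the same as for the corresponding instantaneous network).

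Finally, since this MDS code has free distance at least $2t_s+1$, it is a legitimate choice for ${\cal C}_s$ in the construction of Subsection~\ref{construction}, and Theorem~\ref{maintheorem} then yields exactly the claimed conclusion, recalling the observation made before Theorem~\ref{maintheorem} that one network use corresponds to one trellis segment of $n$ symbols. I do not anticipate a genuine obstacle here: the only places where care is needed are the ceiling identity above, comparing the MDS free distance against $2t_s+1$ (the weight that actually arises) rather than merely against $2t^{*}+1$, and noting that the separation in Theorem~\ref{maintheorem} is $T_{d_{free}}({\cal C}_s)$ measured in network uses, which is the form in which the corollary states it.
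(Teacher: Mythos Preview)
Your proposal is correct and follows essentially the same route as the paper: bound $t_s$ by Lemma~\ref{tdelay}, plug that bound into Lemma~\ref{deltabound} to read off the required degree, and then appeal to Theorem~\ref{maintheorem}. If anything you are more careful than the paper, which simply asserts $\delta = 2rk\left[(n+1)(T_{delay}-1)+1\right]$ without writing out the ceiling computation and leaves the final invocation of Theorem~\ref{maintheorem} implicit.
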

\begin{IEEEproof}
From Lemma \ref{tdelay}, we have that in the construction of Subsection \ref{construction}, the maximum Hamming weight $t_s$ of any element in the set ${\cal W}_s$ is utmost  $rn\left[\left(n+1\right)\left(T_{delay}-1\right)+1\right].$ For an input MDS convolutional code ${\cal C}_s$ to be capable of correcting such errors with Hamming weight utmost $rn\left[\left(n+1\right)\left(T_{delay}-1\right)+1\right]$, according to Lemma \ref{deltabound}, a degree $\delta = 2rk\left[\left(n+1\right)\left(T_{delay}-1\right)+1\right]$ would suffice.
\end{IEEEproof}

The following theorem gives a sufficient field size for the required network error correcting $(n,k)$ input convolutional code ${\cal C}_s$ to be constructed with the required free distance condition ($d_{free}({\cal C}_s) \geq 2t_s+1$).
\begin{theorem}
\label{fieldsizebound}
The code ${\cal C}_s$ can be constructed and used to multicast $k$ symbols to the set of sinks ${\cal T}$ along with the required error correction in the given acyclic, unit-delay, memory-free network with min-cut $n$ ($n>k$), if the field size $q$ is such that 
\begin{eqnarray*}
n|(q-1)~~~~~~~~~~~~~~~~~~~~~~~~~~~~\\
\text{and}~~~~~~~~~~~~~~~~~~~~~~~~~~~~~~\\
~~q > max\left\{|{\cal T}|,\frac{2rn^2\left[\left(n+1\right)\left(T_{delay}-1\right)+1\right]}{n-k}+2\right\}.
\end{eqnarray*}
\end{theorem}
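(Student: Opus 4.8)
The plan is to realize both objects that the theorem requires — an $n$-dimensional linear multicast network code for $\mathcal{G}$ and an $(n,k)$ input convolutional code meeting the free-distance bound of the construction in Subsection \ref{construction} — over one common field $\mathbb{F}_q$, and to track the two field-size constraints that these two tasks impose. First I would invoke Lemma 1 together with the standard existence result for linear multicast network codes on acyclic instantaneous networks (e.g. \cite{JSCEEJT}): an $n$-dimensional linear multicast network code for $\mathcal{G}_{inst}$, and hence by Lemma 1 for the unit-delay, memory-free network $\mathcal{G}$, can be constructed over any $\mathbb{F}_q$ with $q > |\mathcal{T}|$. Fixing such a network code determines the quantities $T_{delay}$ and $r$ of Lemma \ref{tdelay} (and in fact these are bounded by graph parameters independent of $q$, since $T_{delay}-1 \le |\mathcal{E}|$ and $r \le n(T_{delay}-1)+1$), so any later enlargement of $q$ does not disturb them.

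With the network code fixed, Corollary \ref{deltaboundnec} tells us that an $(n,k)$ MDS convolutional code of degree $\delta = 2rk\left[(n+1)(T_{delay}-1)+1\right]$ suffices as $\mathcal{C}_s$: it corrects all network-errors with pattern in $\Phi$ separated by at least $T_{d_{free}}(\mathcal{C}_s)$ network uses, because its free distance is at least $2t_s+1$. So the remaining question is purely one of when such an MDS convolutional code exists. I would then appeal to the MDS-convolutional-code construction of \cite{RoS} summarised in Appendix \ref{app2}: an $(n,k)$ MDS convolutional code of degree $\delta$ can be built over $\mathbb{F}_q$ provided $n \mid (q-1)$ and $q$ exceeds the field-size bound of that construction. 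Substituting $\delta = 2rk\left[(n+1)(T_{delay}-1)+1\right]$ into that bound and simplifying (using $\delta/k = 2r\left[(n+1)(T_{delay}-1)+1\right]$ and $n>k$) should collapse it to the requirement $q > \frac{2rn^2\left[(n+1)(T_{delay}-1)+1\right]}{n-k}+2$.

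Finally, choosing $q$ to be any prime power with $n \mid (q-1)$ and $q > \max\left\{|\mathcal{T}|,\ \frac{2rn^2\left[(n+1)(T_{delay}-1)+1\right]}{n-k}+2\right\}$ simultaneously meets the network-code constraint $q>|\mathcal{T}|$ and the two MDS-convolutional-code constraints, so both the network code and the CNECC exist over $\mathbb{F}_q$, which proves the theorem.

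The main obstacle is the middle step: pinning down the exact field-size bound of the \cite{RoS} construction and verifying that, after inserting the specific degree $\delta$ coming from Corollary \ref{deltaboundnec}, the algebra reduces precisely to $\frac{2rn^2[(n+1)(T_{delay}-1)+1]}{n-k}+2$ — in particular accounting for the division by $n-k$ and the additive constant $2$. A secondary point to dispatch cleanly is the apparent circularity (the parameters $r$ and $T_{delay}$ depend on the chosen network code, while constructing that code and the convolutional code requires a large field), which is resolved by the $q$-independent bounds $T_{delay}-1\le|\mathcal{E}|$ and $r\le n(T_{delay}-1)+1$: one fixes the network code first and only then enlarges $q$ for the convolutional code.
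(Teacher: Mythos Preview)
Your proposal is correct and follows essentially the same route as the paper: invoke $q>|\mathcal{T}|$ for the existence of the multicast network code, take $\delta=2rk[(n+1)(T_{delay}-1)+1]$ from Corollary~\ref{deltaboundnec}, and plug this $\delta$ into the MDS convolutional-code field-size condition $n\mid(q-1)$, $q>\frac{\delta n^2}{k(n-k)}+2$ from Appendix~\ref{app2} (note that the relevant construction and bound are from \cite{RLS}, not \cite{RoS}) to obtain the stated expression. The paper's proof does not address the circularity you raise about $r$ and $T_{delay}$ depending on the network code; your extra remark there is a reasonable addition but not part of the paper's argument.
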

\begin{IEEEproof}
From the sufficient condition for the existence of a linear multicast network code for a single source network with a set of sinks $\cal T$, we have 
\[
q>|{\cal T}|.
\]

Now we prove the other conditions. From the construction in \cite{RLS}, we know that a $(n,k,\delta)$ MDS convolutional code can be constructed over $\mathbb{F}_q$ if
\[
n|(q-1)~~~~~~~~  and~~~~~~~  q >\frac{{\delta}n^2}{k\left(n-k\right)}+2.
\]

Thus, with $\delta = 2rk\left[\left(n+1\right)\left(T_{delay}-1\right)+1\right]$ as in Corollary \ref{deltaboundnec}, an input MDS convolutional code ${\cal C}_s$ can be constructed over $\mathbb{F}_q$ if
\[
n|(q-1)~~~~~~ and~~~~~~q > \frac{2rn^2\left[\left(n+1\right)\left(T_{delay}-1\right)+1\right]}{n-k}+2.
\]
Such an MDS convolutional code the requirements in the construction $\left(d_{free}({\cal C}_s) \geq 2rn\left[\left(n+1\right)\left(T_{delay}-1\right)+1\right]+1 \geq 2t_s+1\right)$, and hence the theorem is proved. 
\end{IEEEproof}
\subsubsection{Bound on $T_{d_{free}}({\cal C}_s)$}
Towards obtaining a bound on $T_{d_{free}}({\cal C}_s)$, we first restate the following bound proved in \cite{PrR}. 
\begin{proposition}
Let $\cal C$ be a $(c,b,\delta)$ convolutional code. Then
\begin{equation}
\label{eqn19}
T_{d_{free}}({\cal C}) \leq \left(d_{free}\left({\cal C}\right)-1\right)\delta+1.
\end{equation}
\end{proposition}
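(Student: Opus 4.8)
The plan is to bound the length $j$ of any truncated codeword sequence $\boldsymbol{v}_{[0,j)}$ that lies in $S_{d_{free}}$ — i.e., that starts in the zero state and has Hamming weight strictly less than $d_{free}({\cal C})$ — and then invoke the definition $T_{d_{free}}({\cal C}) = \max_{\boldsymbol{v}_{[0,j)} \in S_{d_{free}}} j + 1$. The key structural fact I would use is that a $(c,b,\delta)$ convolutional code has an encoder with $\delta$ memory elements, so its state space has size $q^{\delta}$, and the trellis therefore has at most $q^{\delta}$ states per time level. A truncated sequence starting at the zero state that has not yet "used up" its weight budget corresponds to a walk in the trellis.

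The main idea is a pigeonhole/cycle-elimination argument. Suppose $\boldsymbol{v}_{[0,j)} \in S_{d_{free}}$ has weight $w < d_{free}({\cal C})$, hence $w \leq d_{free}({\cal C}) - 1$. Consider the encoder state sequence $\boldsymbol{\sigma}_0 = \boldsymbol{0}, \boldsymbol{\sigma}_1, \ldots, \boldsymbol{\sigma}_j$. First I would argue that between two consecutive "active" segments (segments carrying nonzero output) — or more carefully, along any maximal run of consecutive all-zero segments — the state cannot repeat, because if $\boldsymbol{\sigma}_{t_1} = \boldsymbol{\sigma}_{t_2}$ for $t_1 < t_2$ with all segments in $[t_1, t_2)$ being zero, then excising that loop yields a shorter sequence in $S_{d_{free}}$ of the same weight, so a maximal-length element of $S_{d_{free}}$ has no such repeat. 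A run of all-zero segments starting from a given state and returning to (or passing through) distinct states can have length at most the number of states minus one before it must either terminate the sequence or force the state to leave the zero-input behavior; combined with the fact that there are at most $d_{free}({\cal C}) - 1$ segments that carry nonzero symbols (each nonzero segment contributes at least $1$ to the weight), one gets that the total length $j$ is at most roughly $(d_{free}({\cal C}) - 1)$ nonzero segments plus the intervening zero-runs, each bounded in terms of $\delta$.

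The delicate part — and the step I expect to be the main obstacle — is pinning down the precise bound on the length of a zero-output run so that the intervening-runs contribution multiplies out to exactly $(d_{free}({\cal C})-1)\delta$ rather than something larger. The clean way is: a maximal element of $S_{d_{free}}$ has at most $d_{free}({\cal C}) - 1$ segments of nonzero output, and after the last nonzero segment the sequence must stop (a trailing zero-run could be excised without changing membership in $S_{d_{free}}$, since weight is unaffected and the truncation length would only shrink — actually one must be slightly careful here, since $T_{d_{free}}$ takes a max, so trailing zeros would *increase* $j$; the resolution is that an all-zero tail can be extended arbitrarily only if the code is catastrophic-free considerations force the state to return to zero, but in fact the standard argument shows the relevant $j$ is the length up to and including the last segment before the state-sequence would loop). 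Between/before nonzero segments, a zero-output run visits distinct states, and since $\boldsymbol{\sigma}_0 = \boldsymbol{0}$, such a run has length at most $\delta$ in the appropriate normalization of "state changes needed." Multiplying $(d_{free}({\cal C})-1)$ nonzero segments by the per-segment $\delta$ budget and adding the final $+1$ from the definition gives the claimed $T_{d_{free}}({\cal C}) \leq (d_{free}({\cal C})-1)\delta + 1$. I would present this by first establishing the loop-excision lemma, then the per-run length bound via the state-count $q^{\delta}$ (or more sharply, via the $\delta$-dimensional state vector), and finally the counting that assembles the two into the stated inequality; since the proposition is cited as already proved in \cite{PrR}, the cleanest exposition simply references that argument and sketches these three steps.
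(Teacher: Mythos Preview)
The paper does not prove this proposition; it merely restates it from \cite{PrR}. So there is no ``paper's own proof'' to compare against beyond the citation, and your closing remark that the cleanest exposition simply references \cite{PrR} is exactly what the paper does.

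Your sketch of an independent argument, however, has a genuine gap at precisely the step you flag as delicate. Loop-excision shows that a zero-output run in a maximal element of $S_{d_{free}}$ passes through distinct encoder states, but the state space of a minimal $(c,b,\delta)$ encoder has $q^{\delta}$ elements, so pigeonhole bounds such a run by $q^{\delta}-1$, not by $\delta$. Your appeal to ``the $\delta$-dimensional state vector'' in the ``appropriate normalization'' does not close this: nothing in a mere state-count forces a zero-output walk to visit at most $\delta$ states rather than up to $q^{\delta}$ of them. With only the cardinality bound the argument yields at best $T_{d_{free}}({\cal C}) \le (d_{free}({\cal C})-1)q^{\delta}+1$, exponentially weaker than the claim. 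A sharper structural fact about minimal basic (non-catastrophic) encoders --- controlling how long the output can remain zero while the state is nonzero, linearly in $\delta$ --- is what is actually needed, and that is exactly the content that has to be imported from \cite{PrR}. As secondary issues: your final accounting (nonzero segments interleaved with zero-runs of length at most $\delta$) does not obviously sum to $(d_{free}-1)\delta+1$ rather than $(d_{free}-1)(\delta+1)$; and the trailing-zeros discussion is left unresolved --- the all-zero path would make $T_{d_{free}}$ infinite unless the definition implicitly excludes paths that have remerged with the zero state, a point that must be made explicit before any counting can succeed.
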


Thus, for a network error correcting MDS convolutional code ${\cal C}_s$ for the unit-delay network, we have the following bound on $T_{d_{free}}({\cal C}_s)$.
\begin{corollary}
Let the CNECC ${\cal C}_s$ be a $(n,k,\delta = 2rk\left[\left(n+1\right)\left(T_{delay}-1\right)+1\right])$ MDS convolutional code, where $r$ and $T_{delay}$ are as in  Lemma \ref{tdelay}. Then
\begin{eqnarray*}
T_{d_{free}}({\cal C}_s) \leq 4r^2nk\left[\left(n+1\right)\left(T_{delay}-1\right)+1\right]^2~~~~~~~~~~~~~~~~~ \\ 
~~~~~~~~~~~~~~~~+ 2rk\left(n-k\right)\left[\left(n+1\right)\left(T_{delay}-1\right)+1\right] + 1.
\end{eqnarray*}
\end{corollary}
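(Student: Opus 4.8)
The plan is to combine the generalized Singleton bound, which holds with equality for the MDS convolutional code ${\cal C}_s$, with the bound (\ref{eqn19}), namely $T_{d_{free}}({\cal C}) \leq \left(d_{free}({\cal C})-1\right)\delta + 1$. To keep the algebra readable I would first abbreviate $L := (n+1)(T_{delay}-1)+1$, so that the degree is $\delta = 2rkL$. The one point worth noting is that this $\delta$ is a multiple of $k$, so $\lfloor \delta/k \rfloor = 2rL$ exactly, with no floor correction; this is precisely why the final bound comes out as a clean closed form.

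With this in hand, the generalized Singleton bound gives
\[
d_{free}({\cal C}_s) = (n-k)\left(\lfloor \delta/k \rfloor + 1\right) + \delta + 1 = (n-k)(2rL+1) + 2rkL + 1 .
\]
Expanding and collecting terms, the contributions $2rL(n-k)$ and $2rkL$ combine into $2rLn$, leaving $d_{free}({\cal C}_s) = 2rLn + (n-k) + 1$, and hence $d_{free}({\cal C}_s) - 1 = 2rLn + (n-k)$.

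Feeding this into (\ref{eqn19}) with $\delta = 2rkL$ yields
\[
T_{d_{free}}({\cal C}_s) \leq \left(2rLn + (n-k)\right)\cdot 2rkL + 1 = 4r^2 nk L^2 + 2rk(n-k)L + 1 ,
\]
and substituting back $L = (n+1)(T_{delay}-1)+1$ produces exactly the claimed inequality. There is no real obstacle here: the result is a direct corollary of the stated proposition and of Corollary \ref{deltaboundnec}, the only care needed being the observation above about divisibility of $\delta$ by $k$. For completeness one may also remark that, by Corollary \ref{deltaboundnec}, this choice of $\delta$ already guarantees $d_{free}({\cal C}_s) \geq 2t_s+1$, so that ${\cal C}_s$ is a legitimate CNECC for the network; but this fact is not needed for the inequality itself.
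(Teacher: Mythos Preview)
Your proof is correct and follows essentially the same route as the paper: compute $d_{free}({\cal C}_s)$ from the generalized Singleton bound (using that $\delta = 2rkL$ is a multiple of $k$), then substitute $d_{free}({\cal C}_s)-1$ and $\delta$ into the bound (\ref{eqn19}). Your explicit remark about the divisibility of $\delta$ by $k$ and the abbreviation $L$ make the computation cleaner, but the argument is identical in substance.
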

\begin{IEEEproof}
For MDS convolutional codes, we have 
\[
d_{free}({\cal C}) = (n-k)(\lfloor\delta/k\rfloor + 1) + \delta + 1
\]
With $\delta = 2rk\left[\left(n+1\right)\left(T_{delay}-1\right)+1\right]$, we have 
\begin{eqnarray*}
d_{free}({\cal C}_s) = (n-k)\left\{2r\left[\left(n+1\right)\left(T_{delay}-1\right)+1\right] + 1\right\}\\ ~~~~~~~~~~~~~~~~~~+ 2rk\left[\left(n+1\right)\left(T_{delay}-1\right)+1\right] + 1 \\
d_{free}({\cal C}_s) = 2rn\left[\left(n+1\right)\left(T_{delay}-1\right)+1\right] + n-k+1~
\end{eqnarray*}
Substituting this value of $d_{free}({\cal C}_s)$ and $\delta$ in (\ref{eqn19}), we have proved that 
\begin{eqnarray*}
T_{d_{free}}({\cal C}_s) \leq 4r^2nk\left[\left(n+1\right)\left(T_{delay}-1\right)+1\right]^2~~~~~~~~~~~~~~~~~ \\ 
~~~~~~~~~~~~~~~~+ 2rk\left(n-k\right)\left[\left(n+1\right)\left(T_{delay}-1\right)+1\right] + 1.
\end{eqnarray*}
\end{IEEEproof}
\section{Illustrative examples}
\label{sec5}
\subsection{Code construction for a modified Butterfly network:}
\label{subsec5a}
Let us consider the modified butterfly network as shown in Fig. \ref{fig:butterflydelay}, with one of the edges at the bottleneck node (of the original unmodified butterfly network) having twice the delay as any other edge, thus forcing an inter-generation linear combination at the bottleneck node. The local kernels at the node defining the network code are the same as in that of the instantaneous butterfly case. We assume the network code to be over $\mathbb{F}_2$ and we design a convolutional code over $\mathbb{F}_2$ that will correct all single edge errors in the network, i.e, all network error vectors of Hamming weight utmost $1.$

\begin{figure}[htbp]
\centering
\includegraphics[totalheight=3.4in,width=3.2in]{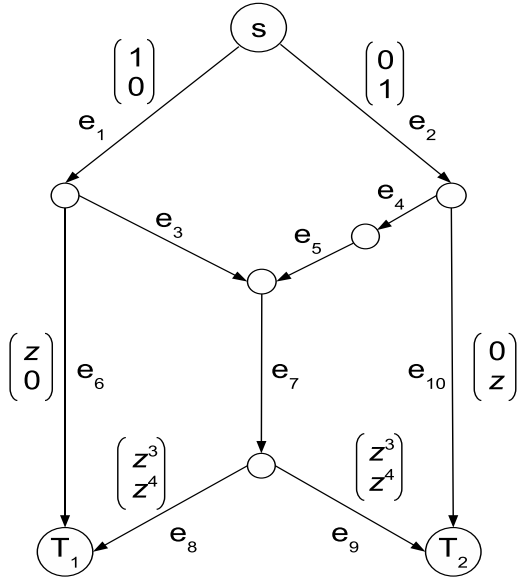}
\caption{Modified butterfly network with global kernels}	
\label{fig:butterflydelay}	
\end{figure}

For this network, the matrix $A$ is a $2 \times 10$ matrix having a $2 \times 2$ identity submatrix at the columns corresponding to edges $e_1$ and $e_2$, and having zeros everywhere else. We assume $B^{T_1}$ and $B^{T_2}$ are $10 \times 2$ matrices such that they have a $2 \times 2$ identity submatrix at rows $e_6, e_8$ and $e_9, e_{10}$ respectively. With the given network code, we thus have the network transfer matrices at sink $T_1$ and $T_2$ as follows
\[
M_{T_1}(z) = \left[ \begin{array}{cc}
z & z^3 \\
0  & z^4 \end{array} \right]=AF_{T_1}(z)
\]
where 
\[
F_{T_1}(z) = \left[ \begin{array}{cccccccccc}
z & 0 & 0 & 0 & 0 & 1 & 0 & 0 & 0 & 0\\
z^3 & z^4 & z^2 & z^3 & z^2 & 0 & z & 1 & 0 & 0
\end{array} \right]^T
\]
and 
\[
M_{T_2}(z) = \left[ \begin{array}{cc}
z^3 & 0 \\
z^4  & z \end{array} \right]=AF_{T_2}(z)
\]
where 
\[
F_{T_2}(z) = \left[ \begin{array}{ccccccccccc}
z^3 & z^4 & z^2 & z^3 & z^2 & 0 & z & 0 & 1 & 0\\
0 & z & 0 & 0 & 0 & 0 & 0 & 0 & 0 & 1
\end{array} \right]^T.
\]

For single edge errors, we have the error pattern set to be
\[
\Phi=\left\{\left\{e_i\right\}:i=1,2,...,9,10\right\}.
\]
And thus the set ${\cal W}_{\Phi}$ is the set of all vectors $\mathbb{F}_2$ that have Hamming weight utmost $1.$ The sets ${\cal W}_{T_1}$ and ${\cal W}_{T_2}$ as in (\ref{eqn13}) and (\ref{eqn14}) at the top of the next page.
\begin{figure*}
\begin{eqnarray}
\label{eqn13} {\cal W}_{T_1}= \left\{(0,0),(0,1),(1,0),(0,z),(0,z^2),(0,z^3),(0,z^4),(z,z^3)\right\} \\
\label{eqn14} {\cal W}_{T_2}= \left\{(0,0),(0,1),(1,0),(z,0),(z^2,0),(z^3,0),(0,z^4),(z^4,z)\right\}
\end{eqnarray}
\hrule
\end{figure*}
Now 
\[
M_{T_1}^{-1}(z) = \frac{1}{z^5}\left[ \begin{array}{cc}
z^4 & z^3 \\
0  & z \end{array} \right]
\]
and 
\[
M_{T_2}^{-1}(z) = \frac{1}{z^4}\left[ \begin{array}{cc}
z & 0 \\
z^4  & z^3 \end{array} \right].
\]
To obtain the processing matrices $P_{T_1}(z)$ and $P_{T_2}(z)$, let us choose the processing functions $p_{_{T_1}}(z) = z^4$ and $p_{_{T_2}}(z) = z^3.$ Then we have
\begin{equation}
\label{proc1}
P_{T_1}(z) = p_{_{T_1}}(z)M_{T_1}^{-1}(z)=\left[ \begin{array}{cc}
z^3 & z^2 \\
0  & 1 \end{array} \right]
\end{equation}
and 
\begin{equation}
\label{proc2}
P_{T_2}(z) = p_{_{T_2}}(z)M_{T_2}^{-1}(z)=\left[ \begin{array}{cc}
1 & 0 \\
z^3  & z^2 \end{array} \right].
\end{equation}
Therefore, ${\cal W}_s$ can be computed to be as in (\ref{eqn15}) at the top of the next page.
\begin{figure*}
\begin{equation}
\label{eqn15}
{\cal W}_s = \left\{ (0,0),(z^3,z^2), (0,1), (0,z),(0,z^2), (0,z^3),(0,z^4), (z,0), (z^2,0), (z^3,0) \right\}.
\end{equation}
\hrule
\end{figure*}
Thus we have $t_s=2$, which means that we need a convolutional code with free distance at least $5.$ Let the chosen input convolutional code ${\cal C}_s$ be generated by the generator matrix 
\[
G_I(z)=\left[1+z^2~~~1+z+z^2\right].
\]
This code has a free distance $d_{free}({\cal C}_s)=5$ and $T_{d_{free}}({\cal C}_s)=6.$ Therefore this code can be used to correct single edge errors in the butterfly network as long as consecutive errors are separated by $6$ network uses. With this code, the output convolutional code ${\cal C}_{T_1}$ at sink $T_1$ is generated by the matrix
\[
G_{O,T_1}(z)=\left[z + z^3~~z^3+z^4+z^6\right]
\]
Now ${\cal C}_{T_1}$ has $d_{free}({\cal C}_{T_1})=5$ and $T_{d_{free}}({\cal C}_{T_1})=9 > T_{d_{free}}({\cal C}_s)$. As condition (\ref{decodAcond2}) is not satisfied, Case-B applies and hence the sink $T_1$ has to use the processing matrix $P_{T_1}(z)$, and then decode on the trellis of the input convolutional code. Upon performing a similar analysis for sink $T_2$, we have Table \ref{tab1} as shown at the top of the next page.

\begin{table*}
\centering
\normalsize
\caption{Modified butterfly network with ${\cal C}_s[d_{free}({\cal C}_s)=5,T_{d_{free}}({\cal C}_s)=6]$} \begin{tabular}{|c|c|c|c|}\hline
\textbf{Sink} & \textbf{Output convolutional code generator matrix} $[G_{O,T_i}(z)]$& \textbf{$d_{free}({\cal C}_{T_i})$}, $T_{d_{free}}({\cal C}_{T_i})$& \textbf{Decoding on}\\
\hline
$T_1$ & $\left[z + z^3~~~z^3+z^4+z^6\right]$ & 5,9 & Input trellis \\
\hline
$T_2$ & $[z^3+z^4+z^6~~~z+z^2+z^3]$ & 6,12 & Input trellis\\
\hline
\end{tabular}
\label{tab1}
\newline \newline
\end{table*}
\subsection{$_4C_2$ combination network over ternary field}
\label{subsec5b}
We now give a code construction for double edge error correction in the $_4C_2$ combination network with a network code over $\mathbb{F}_3$, shown in Fig. \ref{fig:4C2delay}  with the given $2$ dimensional network code, the network transfer matrices and the processing matrices (upon choosing the processing functions $p_{T_i}(z) = p_{T}(z) = z~\forall~1 \leq i \leq 6$) corresponding to the $6$ sinks are indicated in Table \ref{tab2}.

\begin{figure}[htbp]
\centering
\includegraphics[totalheight=3.1in,width=3.5in]{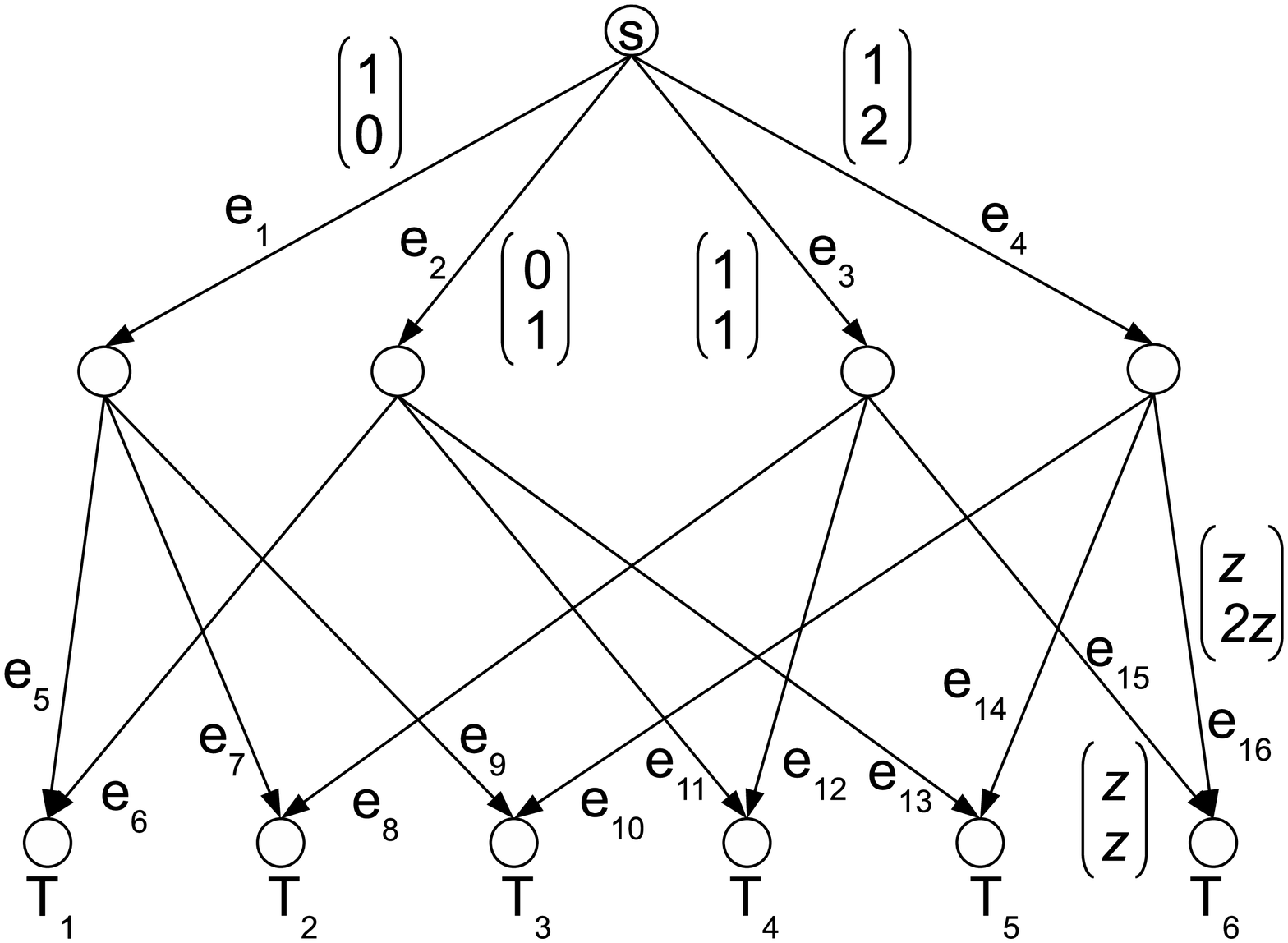}
\caption{$_4C_2$ unit-delay network}	
\label{fig:4C2delay}	
\end{figure}

The matrix $F_{T_1}(z)$ corresponding to sink $T_1$ is the $2 \times 16$ matrix as follows
\[
F_{T_1}(z) = \left[ \begin{array}{ccccccc}
z & 0 & 0 & 0 & 1 & 0 & 0......0\\
0 & z & 0 & 0 & 0 & 1 & 0......0
\end{array} \right]^T.
\]
For each sink, we have a similar $F_T(z)$ matrix with a $z$-scaled identity submatrix and an identity submatrix and zeros at all other entries.

For double edge error correction, the error pattern set $\Phi$ is
\[
\Phi~=~\left\{\left\{e_i,e_j\right\}:i,j=1,2,...,15,16 \text{ and }i \neq j\right\}.
\]
And therefore, we have the set ${\cal W}_{\Phi}$ as the set of all $16$ length tuples from $\mathbb{F}_3$ with Hamming weight utmost $2.$ The set ${\cal W}_{T_i}~\forall~i$ can be computed to be as shown in (\ref{eqn16}) at the top of the next page. Now, the set	 
\[
{\cal W}_{s,T_2} = \left\{\boldsymbol{w}_{_{T_2}}(z)P_{T_2}(z)~|~\boldsymbol{w}_{_{T_2}}(z)\in{\cal W}_{T_2}\right\}
\]
is computed to be as in (\ref{eqn17}), also shown at the top of the next page. 
\begin{figure*}
\begin{equation}
\label{eqn16}
{\cal W}_{T_i}=
\left\{ 
\begin{array}{ccccccc}
(z,0),& (0,z), & (1,0), & (0,1),& (2z,0), & (0,2z), & (2,0), \\
(0,2), & (z,z), & (z,2z), & (2z,z), & (2z,2z), & (z+1,0), & (z+2,0), \\
(2z+1,0), & (2z+2,0), & (z,1), & (z,2), & (2z,1), & (2z,2), & (1,z), \\ 
(1,2z), & (2,z), & (2,2z), & (0,z+1), & (0,z+2), & (0,2z+1), & (0,2z+2),\\
(1,1), & (1,2), & & (2,1), & &(2,2), & (0,0)
\end{array}
\right\}
\end{equation}
\hrule
\begin{equation}
\label{eqn17}
{\cal W}_{s,T_2}=
\left\{ 
\begin{array}{cccccc}
(z,2z),&(0,z), & (1,2), & (0,1), & (2z,z), & (0,2z), \\
(2,1), & (0,2), & (z,0), & (z,z), & (2z,2z), & (2z,0), \\ 
(z+1,2z+2), & (z+2,2z+1), & (2z+1,z+2), & (2z+2,z+1), & (z,2z+1), & (z,2z+2), \\
(2z,z+1), & (2z,z+2), & (1,z), & (1,2z), & (2,z), & (2,2z),\\
(0,z+1), & (0,z+2), & (0,2z+1), & (0,2z+2), & (1,0), & (1,1), \\
\text{ } & (2,2), && (2,0), && (0,0)
\end{array}
\right\}
\end{equation}
\hrule
\end{figure*}
\begin{table*}
\caption{$_4C_2$ combination network with ${\cal C}_s[d_{free}({\cal C}_s)=9,T_{d_{free}}({\cal C}_s)=14]$}
 \begin{tabular}{|c|c|c|c|c|c|}\hline
\textbf{Sink} & \textbf{Network transfer matrix} & \textbf{Processing matrix} & \textbf{Output convolutional code} & \textbf{$d_{free}({\cal C}_{T_i})$}, & \textbf{Decoding on}\\
 & & & \textbf{gen. matrix} $[G_{O,T_i}(z)]$ & $T_{d_{free}}({\cal C}_{T_i})$ & \\
\hline
$T_1$ & $M_{T_1}(z)=\left( \begin{array}{cc}z & 0 \\0  & z \end{array} \right)$ & $P_{T_1}(z)= \left( \begin{array}{cc}1 & 0 \\0  & 1 \end{array} \right)$ & $\left[z+z^3+z^5+z^6~~2z+z^2+2z^3+2z^5+z^6\right]$ & 5,9 & Output trellis \\
\hline
$T_2$ & $M_{T_2}(z)=\left( \begin{array}{cc}z & z \\0  & z \end{array} \right)$ & $P_{T_2}(z)= \left( \begin{array}{cc}1 & 2 \\0  & 1 \end{array} \right)$ & $[z+z^3+z^5+z^6~~z^2+2z^6]$ & 6,11 & Output trellis\\
\hline
$T_3$ & $M_{T_3}(z)=\left( \begin{array}{cc}z & z \\0  & 2z \end{array} \right)$ & $P_{T_3}(z)= \left( \begin{array}{cc}2 & 2 \\0  & 1 \end{array} \right)$ & $[z+z^3+z^5+z^6~~2z+2z^2+2z^3+2z^5]$ & 6,11 & Output trellis\\
\hline
$T_4$ & $M_{T_4}(z)=\left( \begin{array}{cc}0 & z \\z  & z \end{array} \right)$ & $P_{T_4}(z)= \left( \begin{array}{cc}1 & 2 \\2  & 0 \end{array} \right)$ & $[2z+z^2+2z^3+2z^5+z^6~~z^2+2z^6]$ & 7,12 & Output trellis\\
\hline
$T_5$ & $M_{T_5}(z)=\left( \begin{array}{cc}0 & z \\z  & 2z \end{array} \right)$ & $P_{T_5}(z)= \left( \begin{array}{cc}2 & 2 \\2  & 0 \end{array} \right)$ & $[2z+z^2+2z^3+2z^5+z^6~~2z+2z^2+2z^3+2z^5]$ & 9,14 & Output trellis \\
\hline
$T_6$ & $M_{T_6}(z)=\left( \begin{array}{cc}z & z \\z  & 2z \end{array} \right)$ & $P_{T_6}(z)= \left( \begin{array}{cc}2 & 2 \\2  & 1 \end{array} \right)$ & $[z^2+2z^6~~2z+2z^2+2z^3+2z^5]$ & 6,13 & Output trellis\\
\hline
\end{tabular}
\label{tab2}
\newline \newline
\end{table*}
Similarly the sets ${\cal W}_{s,T_i}~(\forall~1 \leq i \leq 6)$ and  
\[
{\cal W}_s=\bigcup_{T_i\in{\cal T}}{\cal W}_{s,T_i}
\]
are computed. It is seen that for this network, 
\[
t_s = \max_{\boldsymbol{w}_s(z) \in {\cal W}_s}w_H\left(\boldsymbol{w}_s(z)\right) = 4
\]
and
\[
t_{T_i} = \max_{\boldsymbol{w}_{_{T_i}}(z) \in {\cal W}_{T_i}}w_H\left(\boldsymbol{w}_{_{T_i}}(z)\right) = 2,~\forall~1 \leq i \leq 6.
\]
Therefore we need a convolutional code with free distance $9$ to correct such errors. Let this input convolutional code ${\cal C}_s$ over $\mathbb{F}_3$ be chosen as the code generated by 
\[
G_I(z)=\left[1+z^2+z^4+z^5~~2+z+2z^2+2z^4+z^5\right].
\] 
This code is found to have $d_{free}({\cal C}_s)=9$ with $T_{d_{free}}({\cal C}_s)=14.$ Thus it can correct all double edge network errors as long as consecutive network errors are separated by $14$ network uses. The output convolutional codes $T_{d_{free}}({\cal C}_{T_i})$, their free distance and $T_{d_{free}}({\cal C}_{T_i})$ are computed and tabulated in  Table \ref{tab2} at the top of the next page. For this example, all the sinks satisfy the conditions (\ref{decodAcond1}) and (\ref{decodAcond2}) for Case-A of the decoding and therefore decode on the trellises of the corresponding output convolutional codes.
\section{Comparison between CNECCs for instantaneous and unit-delay, memory-free networks}
\label{sec6}
In the following discussion, we compare the CNECCs for a given instantaneous network constructed in \cite{PrR} and the CNECCs of Subsection \ref{construction} for the corresponding unit-delay, memory-free network.

With the given acyclic graph ${\cal G}({\cal V},{\cal E})$, we will compare the maximum Hamming weight $t_s$ of any $n$-tuple, over $\mathbb{F}_q[z]$ ($\boldsymbol{w}_s(z) \in {\cal W}_s$, where ${\cal W}_s$ is as in (\ref{eqn5})) in the case of the unit-delay, memory-free network with the  graph ${\cal G}$ and over $\mathbb{F}_q$ ($\boldsymbol{w}_s \in {\cal W}_s$ where ${\cal W}_s$ is as in (\ref{eqn12})) in the case of instantaneous network with the graph ${\cal G}$. 

Consider some $\boldsymbol{w}_s(z) \in {\cal W}_s$ such that
\begin{eqnarray}
\nonumber
\boldsymbol{w}_s(z)=\boldsymbol{w}F_T(z)P_T(z)=\boldsymbol{w}p_{_T}(z)F_T(z)M_T^{-1}(z) \\
\label{eqn7} = \left[w_{s,1}(z),w_{s,2}(z),...,w_{s,n}(z)\right]
\end{eqnarray}
where $p_{_T}(z)$ and $P_T(z)$ indicate the processing function and matrix chosen according to (\ref{eqn6}) for some sink $T \in {\cal T}$, and $w_{s,i}(z) \in \mathbb{F}_q[z].$ We have $M_T(z)|_{z=1} = M_T$ and also $F_T(z)|_{z=1}=F_T$, the network transfer matrix and the $F_T$ matrix of the sink $T$ in the instantaneous network. Now, by (\ref{eqn7}), we have the $n$-length vector $\boldsymbol{w}_{s,inst}$ corresponding to the error vector $\boldsymbol{w}$ as 
\[
\boldsymbol{w}_{s,inst}=\boldsymbol{w}F_TM_T^{-1}=\frac{\boldsymbol{w}_s(z)|_{z=1}}{p_{_T}(z)|_{z=1}}
\]
where
\[
p_{_T}(z)|_{z=1} = \frac{Det\left(M_T(z)\right)|_{z=1}}{g(z)|_{z=1}}
\]
by (\ref{eqn6}). Now $Det\left(M_T(z)\right)|_{z=1} = Det\left(M_T\right) \neq 0$ since $M_T$ is full rank. Also, $g(z)|_{z=1} \neq 0$ for the same reason. Therefore, $p_{_T}(z)|_{z=1} \neq 0.$ Thus we have 
\begin{equation}
\label{eqn8}
w_H\left(\boldsymbol{w}_{s,inst}\right) \leq w_H\left(\boldsymbol{w}_s(z)\right).
\end{equation}
Therefore a CNECC for an instantaneous network may require a lesser free distance to correct networks errors matching one of the given set of patterns $\Phi$, while the CNECC for the corresponding unit-delay, memory-free network may require a larger free distance to provide the same error correction according to the construction of Subsection \ref{construction}. 

An example of this case is the code construction for double edge error correction for the $_4C_2$ combination instantaneous network in \cite{PrR} and for the $_4C_2$ unit-delay network in this paper in Subsection \ref{subsec5b}. It can be seen that while for the instantaneous network, the maximum Hamming weight of any $\boldsymbol{w}_s \in {\cal W}_s$ is $2$, the maximum Hamming weight of any $\boldsymbol{w}_s(z) \in {\cal W}_s$ in the unit-delay network is $4.$ Thus a code with free distance $5$ suffices for the instantaneous network, while the code for the unit-delay network has to have a free distance $9$ to ensure the required error correction as per the construction in Subsection \ref{construction}. 

It is in general not easy to obtain the general conditions under which equality will hold in (\ref{eqn8}), as both the topology and the network code of the network influence the Hamming weight of any element in ${\cal W}_s.$ For specific examples however, this can be checked. An example of this case is given in between the single edge-error correcting code construction for the butterfly network (over $\mathbb{F}_2$) for the instantaneous case in \cite{PrR} (the additional intermediate node, $head(e_4)=v_3=tail(e_5)$, does not matter for the instantaneous case), and for the unit-delay case in this paper in Subsection \ref{subsec5a}. In both the cases, we have $t_s=2$, which means that an input convolutional code with free distance $5$ is sufficient to correct all single edge network errors. However, as we see in Subsection \ref{subsec5a}, processing matrices with memory elements need to be used at the sinks for the unit-delay case, while the processing matrix in the instantaneous case is just the $M_T^{-1}$ matrix which does not require any memory elements to implement.

\section{Simulation results}
\label{sec7}
\subsection{A probabilistic error model}
We define a probabilistic error model for a unit delay network  ${\cal G}({\cal V},{\cal E})$ by defining the probabilities of any set of $i~(i\leq|{\cal E}|)$ edges of the network being in error at any given time instant as follows. Across time instants, we assume that the network errors are i.i.d. according to this distribution.
\begin{align}
\label{eq:1}
Prob.&(i~\text{network edges being in error}) = p^i \\
\label{eq:2}
Prob.&(\text{no edges are in error}) = q
\end{align}
where $1 < i \leq |{\cal E}|,$ and $p,q \leq 1$ are real numbers indicating the probability of any single edge error in the network and the probability of no edges in error respectively, such that $q + \sum_{i=1}^{|{\cal E}|}p^i = 1.$
\subsection{Simulations on the modified butterfly network}
With the probability model as in  (\ref{eq:1}) and (\ref{eq:2}) with $|{\cal E}|=10$ for the modified butterfly network as in Fig. \ref{fig:butterflydelay}, we simulate the performance of $3$ input convolutional codes implemented on this network with the sinks performing hard decision decoding on the trellis of the input convolutional code. In the following discussion we refer to sinks $T_1$ and $T_2$ of Fig. \ref{fig:butterflydelay} as Sink 1 and Sink 2. The $3$ input convolutional codes and the rationality behind choosing them are given as follows. 
\begin{itemize}
\item Code ${\cal C}_1$ is generated by the generator matrix 
\[
G_{I_1}(z)=\left[1+z~~~1\right],
\]
with $d_{free}({\cal C}_1)=3$ and $T_{d_{free}}({\cal C}_1)=2.$ This code is chosen only to illustrate the error correcting capability of codes with low values of $d_{free}({\cal C})$ and $T_{d_{free}}({\cal C}).$ 
\item Code ${\cal C}_2$ is generated by the generator matrix 
\[
G_{I_2}(z)=\left[1+z^2~~~1+z+z^2\right],
\]
with $d_{free}({\cal C}_2)=5$ and $T_{d_{free}}({\cal C}_2)=6.$ This code corrects all double edge errors in the instantaneous version (with all edge delays being zero) of Fig. \ref{fig:butterflydelay} as long as they are separated by $6$ network uses.
\item  Code ${\cal C}_3$ is generated by the generator matrix 
\[
G_{I_3}(z)=\left[1+z+z^4~~~1+z^2+z^3+z^4\right],
\]
with $d_{free}({\cal C}_3)=7$ and $T_{d_{free}}({\cal C}_3)=12.$ This code corrects all double edge errors in the unit-delay network given in Fig. \ref{fig:butterflydelay} as long as they are separated by $12$ network uses.
\end{itemize}

We note here that values of $T_{d_{free}}({\cal C})$ of the $3$ codes are directly proportional to their free distances, i.e, the code with greater free distance has higher $T_{d_{free}}({\cal C})$. Also we note that with each of these $3$ codes as the input convolutional codes, the output convolutional codes violate at least one of the conditions of `Case-A' of decoding, i.e, (\ref{decodAcond1}),(\ref{decodAcond2}), or (\ref{decodAcond3}). Therefore, hard decision Viterbi decoding is performed on the trellis of the input convolutional code.

Fig. \ref{fig:BERsink1} and Fig. \ref{fig:BERsink2} illustrate the BERs for different values for the parameter $p$ (the probability of a single edge error) of (\ref{eq:1}). Clearly the BER values fall with decreasing $p.$
\begin{figure*}
\centering
\includegraphics[totalheight=4.9in,width=7.1in]{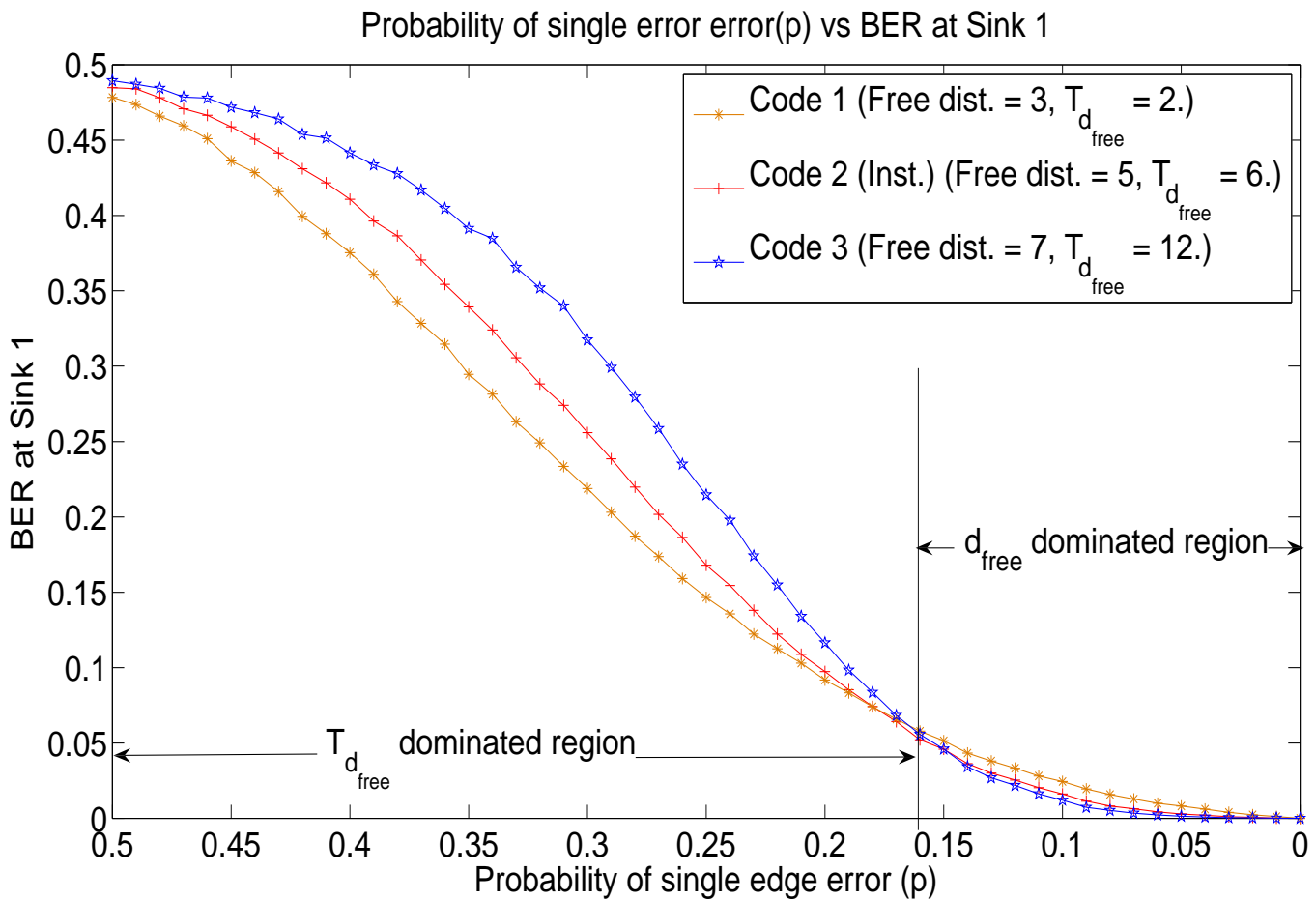}
\caption{BER at Sink 1}	
\label{fig:BERsink1}	
\hrule
\end{figure*}
\begin{figure*}
\centering
\includegraphics[totalheight=4.9in,width=7.1in]{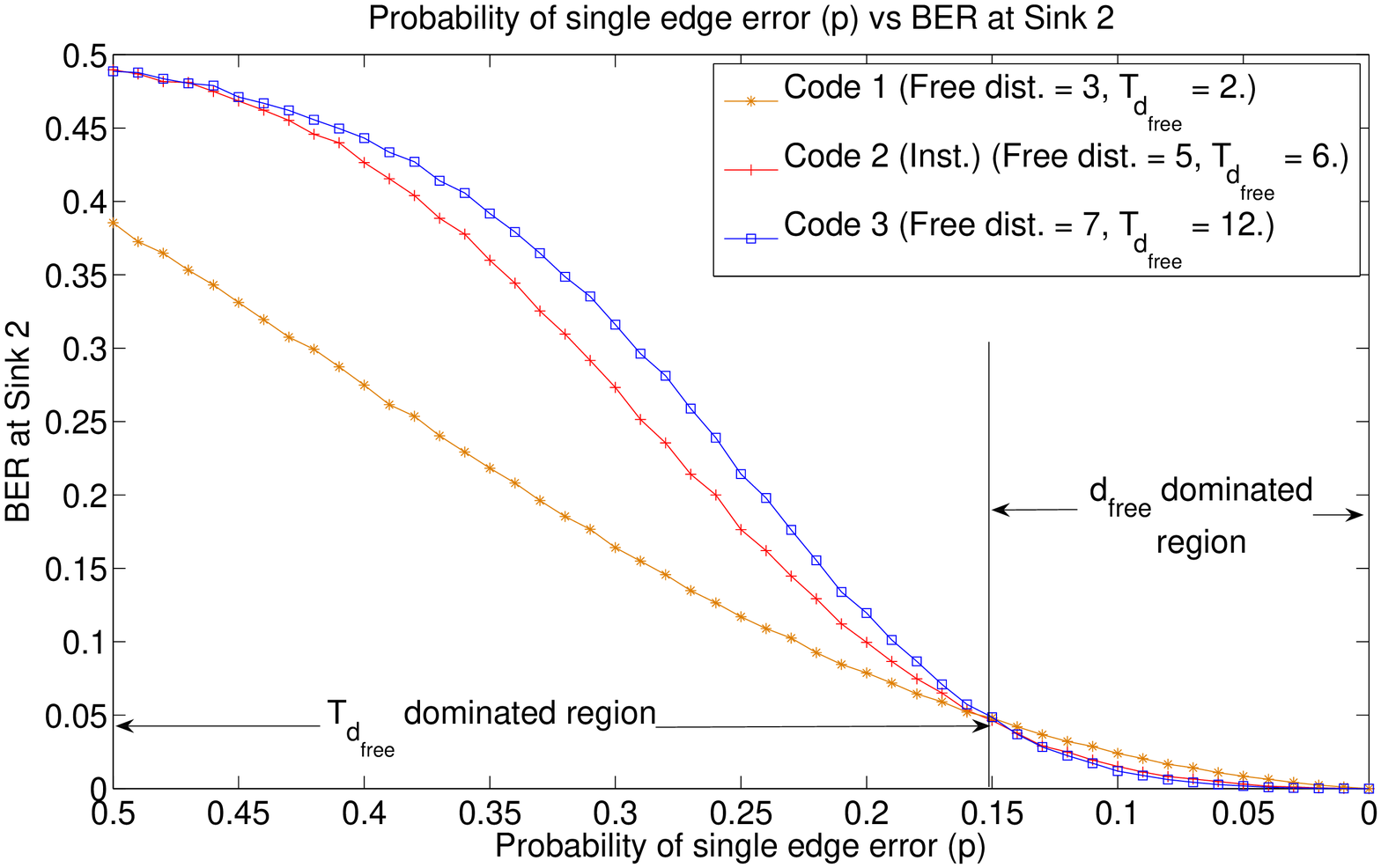}
\caption{BER at Sink 2}	
\label{fig:BERsink2}	
\hrule
\end{figure*}

It may be observed that between any two of the $3$ codes, say ${\cal C}_i$ and ${\cal C}_j$ ($i,j = 1,2,3$) there exist a particular value of $p=p_{i,j}$ where the BER performance corresponding to the two codes gets reversed, i.e, if code ${\cal C}_i$ has better BER performance than ${\cal C}_j$ for any $p > p_{i,j}$, then ${\cal C}_j$ performs better than ${\cal C}_i$ for any $p < p_{i,j}.$ Although such a cross-over value of $p$ exists for each pair of codes, we see that all $3$ codes have approximately the same crossover $p$ value in Fig. \ref{fig:BERsink1} ($p \approx 0.16$) and similarly in Fig. \ref{fig:BERsink2} ($p \approx 0.15$).

With respect to such crossover points between the two codes ${\cal C}_i$ and ${\cal C}_j$, we can divide the performance curve into two regions which we call as  `$T_{d_{free}}$ dominated region' ($p$ values being greater than the crossover $p$ value) and `$d_{free}$ dominated region' ($p$ values being lesser than the crossover $p$ value), indicating the parameter which controls the performance of the codes in each of those regions respectively. Again, because of the $3$ crossover points being approximately equal to one another in each of Fig. \ref{fig:BERsink1} and Fig. \ref{fig:BERsink2}, we divide the entire performance graph of all the $3$ codes into two regions. The following discussion gives an intuition into why the parameters $T_{d_{free}}({\cal C})$ and $d_{free}({\cal C})$ control the performance in the corresponding regions.  

\begin{itemize}
\item $d_{free}$ \textit{dominated region}: In the $d_{free}$ dominated region, codes with higher free distance perform better than those with less free distance. We recall from Proposition \ref{minweighttime} that both the Hamming weight of error events and the separation between any two consecutive error events are important to correct them. Because of the fact $p$ is low in the $d_{free}$ dominated region, the Hamming weight of the modified error sequences of (\ref{eq:3}) is less, and the error events that occur are also separated by sufficient number of network uses. Therefore the condition on the separation of error events according to Proposition \ref{minweighttime} is automatically satisfied even for large $T_{d_{free}}({\cal C})$ codes. Therefore codes which have more free distance (though having more $T_{d_{free}}({\cal C})$) correct more errors than codes with low free distance (though having less $T_{d_{free}}({\cal C})$). It is noted that in this region the code ${\cal C}_3$ (which was designed for correcting double edge errors on the unit-delay network) performs better than ${\cal C}_2$ (which was designed for correcting double edge errors on the instantaneous version of the network). 

\item $T_{d_{free}}$ \textit{dominated region}: In the $T_{d_{free}}$ dominated region, codes with lower  $T_{d_{free}}({\cal C})$ perform better than codes with higher $T_{d_{free}}({\cal C})$, even though their free distances might actually indicate otherwise. This is because of the fact that the error events related to the modified error sequences of (\ref{eq:3}) occur more frequently with lesser separation of network uses (as $p$ is higher). Therefore the codes with lower $T_{d_{free}}({\cal C})$ are able to correct more errors (even though the errors themselves must accumulate less Hamming weight to be corrected) than the codes with higher $T_{d_{free}}({\cal C})$ which demand more separation in network uses between error events for them to be corrected (despite having a greater flexibility in the Hamming weight accumulated by the correctable error events).
\end{itemize} 
\begin{remark}
The difference in the performance of code ${\cal C}_1$ between Sink 1 and Sink 2 is probably due to the unequal error protection to the two code symbols. When the code is `reversed' ,i.e. with $G_{I_1}(z)~=~[1~~~1+z]$, it is observed that the performance at the sinks are also interchanged for unchanged error characteristics. 
\end{remark}
\section{Concluding remarks}
\label{sec8}
In this work, we have extended the approach of \cite{PrR} to introduce network error correction for acyclic, unit-delay, memory-free networks. A construction of CNECCs for acyclic, unit-delay, memory-free networks has been given, which corrects errors corresponding to a given set of patterns as long as consecutive errors are separated by a certain number of network uses. Bounds are derived on the field size required for the construction of a CNECC with the required error correction capability and also on the minimum separation in network uses between any two consecutive network errors. Simulations assuming a probabilistic error model on a modified butterfly network indicate the implementability and performance tractability of such CNECCs. The following problems remain to be investigated.
\begin{itemize}
\item Investigation of error correction bounds for network error correction in unit-delay, memory-free networks.
\item Joint design of the CNECC and network code.
\item Investigation of distance bounds for CNECCs.
\item Design of appropriate processing matrices at the sinks to minimize the maximum Hamming weight of the error sequences.
\item Construction of CNECCs which are optimal in some sense.
\item Further analytical studies on the performance of CNECCs on unit-delay networks.
\end{itemize}
\section*{Acknowledgment} This work was supported  partly by the DRDO-IISc program on Advanced Research in Mathematical Engineering to B.~S.~Rajan.

\appendices%
\section{Convolutional codes-Basic Results}
\label{app1}
We review the basic concepts related to convolutional codes, used extensively throughout the rest of the paper. For $q,$ power of a prime, let $\mathbb{F}_q$ denote the finite field with $q$ elements,  $\mathbb{F}_q[z]$ denote \textit{the ring of univariate polynomials} in $z$ with coefficients from $\mathbb{F}_q,$  $\mathbb{F}_q(z)$ denote \textit{the field of rational functions} with variable $z$ and coefficients from $\mathbb{F}_q$ and $\mathbb{F}_q[[z]]$ denote  \textit{the ring of formal power series} with coefficients from $\mathbb{F}_q$. Every element of $\mathbb{F}_q[[z]]$ of the form $x(z)=\sum_{i=0}^\infty x_iz^i, x_i \in \mathbb{F}_q$. Thus, $\mathbb{F}_q[z] \subset \mathbb{F}_q[[z]]$. We denote the set of $n$-tuples over $\mathbb{F}_q[[z]]$ as $\mathbb{F}_q^n[[z]]$. Also, a rational function $x(z)=\frac{a(z)}{b(z)}$ with $b(0) \neq 0$ is said to be \textit{realizable}. A matrix populated entirely with realizable functions is called a realizable matrix.

For a convolutional code, the \textit{information sequence} $\boldsymbol{u} = \left[\boldsymbol{u}_0,\boldsymbol{u}_1,...,\boldsymbol{u}_t\right](\boldsymbol{u}_i\in\mathbb{F}_q^b)$ and the \textit{codeword sequence} (output sequence) $\boldsymbol{v} = \left[\boldsymbol{v}_0,\boldsymbol{v}_1,...,\boldsymbol{v}_t\right]\left(\boldsymbol{v}_i\in\mathbb{F}_q^c\right)$ can be represented in terms of the delay parameter $z$ as 	
\begin{eqnarray*}
\boldsymbol{u}(z)=\sum_{i=0}^t \boldsymbol{u}_i z^i ~~~ \mbox{  and  }~~~
\boldsymbol{v}(z)=\sum_{i=0}^t \boldsymbol{v}_i z^i
\end{eqnarray*}
\begin{definition}[\cite{JoZ}]
A \textit{convolutional code}, ${\cal C}$ of rate $~b/c~(b~<~c)$ is defined as 
\[
{\cal C} = \{ \boldsymbol{v}(z)\in\mathbb{F}_q^{c}[[z]]~|~ \boldsymbol{v}(z)=\boldsymbol{u}(z)G(z) \}
\] 
where $G(z)$ is a $b \times c$  \textit{generator matrix} with entries from $\mathbb{F}_q(z)$ and rank $b$ over $\mathbb{F}_q(z)$, and $\boldsymbol{v}(z)$ being the codeword sequence arising from the information sequence, $\boldsymbol{u}(z)\in\mathbb{F}_q^{b}[[z]]$.
\end{definition}

Two generator matrices are said to be \textit{equivalent} if they encode the same convolutional code. A \textit{polynomial generator matrix}\cite{JoZ} for a convolutional code $\cal C$ is a generator matrix for $\cal C$ with all its entries from $\mathbb{F}_q[z]$. It is known that every convolutional code has a polynomial generator matrix \cite{JoZ}. Also, a generator matrix for a convolutional code is \textit{catastrophic}\cite{JoZ} if there exists an information sequence with infinitely many non-zero components, that results in a codeword with only finitely many non-zero components.

For a polynomial generator matrix $G(z)$, let $g_{ij}(z)$ be the element of $G(z)$ in the $i^{th}$ row and the $j^{th}$ column, and 
\[
\nu_i:=\max_{j} deg(g_{ij}(z))
\]
be the $i^{th}$ \textit{row degree} of $G(z)$. Let 
\[
\delta: = \sum_{i=1}^{b}\nu_i
\]
be the \textit{degree} of $G(z).$
\begin{definition}[\cite{JoZ} ]
A polynomial generator matrix is called \textit{basic} if it has a polynomial right inverse. It is called \textit{minimal} if its degree $\delta$ is minimum among all generator matrices of $\cal C$.
\end{definition}

Forney in \cite{For} showed that the ordered set $\left\{\nu_{1},\nu_{2},...,\nu_{b}\right\}$ of row degrees (indices) is the same for all minimal basic generator matrices of $\cal C$ (which are all equivalent to one another). Therefore the ordered row degrees and the degree $\delta$ can be defined for a convolutional code $\cal C.$ A rate $b/c$ convolutional code with degree $\delta$ will henceforth be referred to as a $(c,b,\delta)$ code. Also, any minimal basic generator matrix for a convolutional code is non-catastrophic. 

\begin{definition}[\cite{JoZ} ]
A \textit{convolutional encoder} is a physical realization of a generator matrix by a linear sequential circuit. Two encoders are said to be \textit{equivalent encoders} if they encode the same code. A \textit{minimal encoder} is an encoder with the minimal number of delay elements among all equivalent encoders.
\end{definition}

The weight of a vector $\boldsymbol{v}(z) \in \mathbb{F}_q^{c}[[z]]$ is the sum of the Hamming weights (over $\mathbb{F}_q$) of all its $\mathbb{F}_q^{c}$-coefficients. Then we have the following definitions.
\begin{definition}[\cite{JoZ}]
The \textit{free distance} of a convolutional code $\cal C$ is given as 
\[
d_{free}({\cal C})=min\left\{wt(\boldsymbol{v}(z))|\boldsymbol{v}(z)\in{\cal C},\boldsymbol{v}(z)\neq 0\right\}
\]
\end{definition}
\section{MDS convolutional codes}
\label{app2}
We discuss some results on the existence and construction of Maximum Distance Separable (MDS) convolutional codes.

The following bound on the free distance, and the existence of codes meeting the bound, called MDS convolutional codes, was proved in \cite{RoS}. 
\begin{theorem}[\cite{RoS}] 
\label{GenSingBound}
For every base field $\mathbb{F}$ and every rate $k/n$ convolutional code $\cal C$ of degree $\delta$, the free distance is bounded as 
\[
d_{free}({\cal C})\leq(n-k)(\lfloor \delta / k \rfloor + 1) + \delta + 1
\]
\end{theorem}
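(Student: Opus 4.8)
The plan is to produce an explicit low-weight codeword by truncating $\cal C$ to a short block code and then invoking the classical Singleton bound. First I would fix a minimal basic generator matrix $G(z)$ for ${\cal C}$, with rows $\boldsymbol{g}_1(z),\dots,\boldsymbol{g}_k(z)$ of row degrees $\nu_1\le\nu_2\le\dots\le\nu_k$ (the Forney indices). By the definition of the degree of a convolutional code recalled in Appendix \ref{app1}, $\sum_{i=1}^{k}\nu_i=\delta$, and since the least of $k$ nonnegative integers summing to $\delta$ cannot exceed their average, $\nu_1\le\lfloor\delta/k\rfloor=:r$. The guiding idea is that a message supported on the rows of small index yields a codeword of small degree, and hence of small Hamming weight.

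Concretely, I would let ${\cal C}'$ be the set of codewords $\boldsymbol{v}(z)=\boldsymbol{u}(z)G(z)$ whose message $\boldsymbol{u}(z)=(u_1(z),\dots,u_k(z))$ satisfies $u_i(z)=0$ whenever $\nu_i>r$ and $\deg u_i(z)\le r-\nu_i$ whenever $\nu_i\le r$. Since $\deg\boldsymbol{g}_i(z)=\nu_i$, each such $\boldsymbol{v}(z)$ is a polynomial $n$-vector of degree at most $r$, so it may be read as a word of length $(r+1)n$ over $\mathbb{F}$ whose Hamming weight equals its weight as a codeword of ${\cal C}$; thus ${\cal C}'$ is identified with an $\mathbb{F}$-linear block code of length $(r+1)n$. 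Because $G(z)$ has rank $k$ over $\mathbb{F}(z)$, the map $\boldsymbol{u}(z)\mapsto\boldsymbol{u}(z)G(z)$ is injective, so $\dim_{\mathbb{F}}{\cal C}'=\sum_{i:\,\nu_i\le r}(r+1-\nu_i)$. The one step that needs care is the dimension estimate: the discarded terms (those with $\nu_i>r$) are nonpositive, so $\dim{\cal C}'\ge\sum_{i=1}^{k}(r+1-\nu_i)=k(r+1)-\delta=k-(\delta\bmod k)\ge 1$, and in particular ${\cal C}'\ne\{\boldsymbol{0}\}$.

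It then only remains to apply the classical Singleton bound to the nonzero $\mathbb{F}$-linear code ${\cal C}'$ of length $(r+1)n$ and dimension at least $k(r+1)-\delta$: it contains a nonzero word of weight at most $(r+1)n-\big(k(r+1)-\delta\big)+1=(n-k)(r+1)+\delta+1=(n-k)(\lfloor\delta/k\rfloor+1)+\delta+1$. Since ${\cal C}'\subseteq{\cal C}$, this exhibits a nonzero codeword of ${\cal C}$ of that weight, whence $d_{free}({\cal C})\le(n-k)(\lfloor\delta/k\rfloor+1)+\delta+1$, which is the claimed bound. I expect the only genuine obstacle to be conceptual: one must hit upon the right ``staircase'' degree profile for the message (degree $r-\nu_i$ on the $i^{th}$ input) that simultaneously holds the codeword degree down to $r$ and keeps the truncated code's dimension as large as $k(r+1)-\delta$; this balance is exactly what forces the $\lfloor\delta/k\rfloor$ term into the bound.
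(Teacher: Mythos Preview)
The paper does not prove this theorem; it is quoted in Appendix~\ref{app2} as a known result from \cite{RoS} and used as a black box. Your argument is correct and is essentially the original Rosenthal--Smarandache proof: pick a minimal basic $G(z)$, restrict messages to the ``staircase'' profile $\deg u_i\le \lfloor\delta/k\rfloor-\nu_i$, obtain a linear block code of length $n(\lfloor\delta/k\rfloor+1)$ and dimension at least $k(\lfloor\delta/k\rfloor+1)-\delta\ge 1$, and apply the classical Singleton bound. There is no discrepancy to discuss.
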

Theorem \ref{GenSingBound} is known as the \textit{generalized Singleton bound}.	
\begin{theorem}[\cite{RoS}]For any positive integers $k<n$, $\delta$ and for any prime $p$ there exists a field $\mathbb{F}_q$ of characteristic $p$, and a rate $k/n$ convolutional code $\cal C$ of degree $\delta$ over $\mathbb{F}_q$, whose free distance meets the generalized Singleton bound.
\end{theorem}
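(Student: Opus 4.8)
The plan is to establish the missing lower bound $d_{free}({\cal C}) \geq (n-k)(\lfloor \delta / k \rfloor + 1) + \delta + 1$ (Theorem~\ref{GenSingBound} already supplies the reverse inequality) by a genericity argument: I would exhibit the MDS property as a Zariski--open, non-empty condition on the coefficients of a polynomial generator matrix, and then descend to a finite field of characteristic $p$. First I would fix row degrees $\nu_1 \leq \cdots \leq \nu_k$ as balanced as possible subject to $\sum_i \nu_i = \delta$, and work with a minimal-basic generator matrix $G(z) = \sum_{i=0}^{\nu_k} G_i z^i$, where the $i$-th row of $G(z)$ has degree $\nu_i$. Recall that $d_{free}({\cal C})$ is the supremum of the column distances $d_j({\cal C})$, and that $d_j({\cal C})$ is governed by the ranks of the truncated (``sliding'') generator matrix $G_j^c$ built block-Toeplitz-fashion from $G_0,\dots,G_j$. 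Since the generalized Singleton bound is in fact attained at a particular finite truncation index $L = L(n,k,\delta)$, it suffices to guarantee that $d_L({\cal C})$ equals the bound; a codeword of weight below the bound and of finite support in the first $L+1$ blocks would be a nonzero $\mathbb{F}_q$-combination of rows of $G_L^c$ supported on too few columns, so ruling this out amounts to requiring an explicitly described finite family of maximal minors of $G_L^c$ (those indexed by the would-be low-weight support patterns) all to be nonzero.

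Next I would regard the entries of $G_0, \dots, G_{\nu_k}$ as indeterminates over the prime field $\mathbb{F}_p$. Each minor in the above family is then a polynomial in these indeterminates, and the argument hinges on showing that none of them is the zero polynomial, equivalently that their product $P$ is not identically zero. The cleanest route is to produce, over a sufficiently large extension field, one specialization of the $G_i$ for which every required minor is nonzero --- concretely, to arrange that $G_L^c$ embeds into a superregular matrix, i.e. a matrix all of whose minors not forced to vanish by the triangular/Toeplitz structure are nonzero (equivalently, to pass to a generalized-Vandermonde specialization for which each minor is manifestly a nonzero product of differences). Since such matrices of any prescribed size exist over every sufficiently large field in every characteristic, this specialization exists, so $P \not\equiv 0$.

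Finally I would invoke the Schwartz--Zippel lemma (or a direct counting bound): for $q = p^m$ with $m$ large enough that $q$ exceeds the total degree of $P$, there is an assignment of the indeterminates in $\mathbb{F}_q$ with $P \neq 0$, hence with all required minors nonzero; the resulting $G(z) \in \mathbb{F}_q[z]^{k\times n}$ generates a rate $k/n$ code of degree exactly $\delta$ over a field of characteristic $p$ whose column distance at index $L$, and therefore free distance, meets the generalized Singleton bound. The step I expect to be the main obstacle is the bookkeeping in the first two paragraphs: pinning down the exact finite index $L$ at which $d_L$ saturates the bound, identifying precisely which minors of $G_L^c$ must be nonzero (the ``superregularity over the relevant index sets'' condition), and verifying that superregular-type matrices of exactly that shape exist in characteristic $p$. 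One must also ensure the construction yields a code of degree \emph{exactly} $\delta$ --- not less --- together with a noncatastrophic, minimal-basic generator matrix, so that the invariants claimed in the statement are the true invariants of ${\cal C}$. An explicit such construction, with an effective bound on $q$, is the content of \cite{RLS} and underlies Theorem~\ref{fieldsizebound}.
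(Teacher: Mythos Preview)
The paper does not prove this statement: it appears in Appendix~B purely as a cited result from \cite{RoS}, with no argument supplied. There is therefore no proof in the paper to compare your proposal against.

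For what it is worth, your outline is in the right spirit and close to the original \cite{RoS} strategy: parameterize generator matrices of the prescribed row-degree shape by indeterminates, express the failure of the MDS property as the vanishing of an explicit finite family of minors of a sliding generator matrix, show this family is generically nonvanishing by exhibiting one good specialization over a large field of characteristic $p$, then descend to a finite $\mathbb{F}_q$ by Schwartz--Zippel. Two points deserve care. First, the reduction to a \emph{fixed finite} truncation index $L$ independent of $q$ --- which you rightly flag as the main obstacle --- really is the heart of the matter: one must argue that any putative codeword $u(z)G(z)$ of weight below the bound has $\deg u$ controlled purely by $n,k,\delta$, and it is precisely the minimal-basic hypothesis (full rank of both $G(0)$ and the high-order coefficient matrix) that makes this go through. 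Second, the superregular-Toeplitz device you invoke is the standard tool for maximum-distance-\emph{profile} (strongly-MDS) codes; it certainly implies MDS, but it is strictly stronger, and the existence of superregular matrices of the required shape in a given characteristic is itself a nontrivial auxiliary fact --- so along that route you would be proving more than needed while importing a harder lemma. The explicit construction with an effective field-size bound is, as you note, \cite{RLS}.
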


A method of constructing MDS convolutional codes based on the connection between quasi-cyclic codes and convolutional codes was given in \cite{RLS}. The ordered Forney indices for such codes are of the form 
\[
\nu_{1}=\nu_{2}=...=\nu_{l} < \nu_{l+1} =...=\nu_{k}.
\]
where $\nu_{1}=\lfloor \delta/k \rfloor$ and $\nu_{k}=\lfloor \delta/k \rfloor + 1.$

It is known \cite{RLS} that the field size $q$ required for a $(n,k,\delta)$ convolutional code ${\cal C}$ with $d_{free}({\cal C})$ meeting the generalized Singleton bound in the construction in \cite{RLS} needs to be a prime power such that 
\begin{equation}
\label{fieldsizeconv}
n|(q-1)\text{ and } q\geq\delta\frac{n^2}{k(n-k)}+2.
\end{equation}

\end{document}